\theoremstyle{definition}\newtheorem{definition}{Definition}
\theoremstyle{definition}\newtheorem{proposition}{Proposition} 
\def\cRed#1{#1}
\def\Man{{\cRed{\cal M}}}
\begin{document}

\title[]{On the self-force in Bopp-Podolsky electrodynamics}
\author{Jonathan Gratus$^1$\footnote{Email: j.gratus@lancaster.ac.uk} { , } 
Volker Perlick$^2$\footnote{Email: perlick@zarm.uni-bremen.de} { and }
Robin W. Tucker$^1$\footnote{Email: r.tucker@lancaster.ac.uk}
}

\address{$^1$Physics Department, Lancaster 
University, Lancaster LA1 4YB, UK; and The Cockcroft Institute, 
Warrington WA4 4AD, UK}
\address{$^2$ZARM, University of Bremen, 28359 Bremen, Germany}

\begin{abstract}
  In the classical vacuum \cRed{Maxwell-Lorentz} theory the self-force
  of a charged point particle is infinite. This makes \cRed{classical}
  mass renormalization necessary and, \cRed{in the special
    relativistic domain,} leads to the Abraham-Lorentz-Dirac equation
  of motion possessing unphysical run-away and pre-acceleration
  solutions. In this paper we investigate whether the higher-order
  modification of classical vacuum electrodynamics suggested by Bopp,
  Land\'{e}, Thomas and Podolsky in the 1940s, can provide a solution
  to this problem. Since the theory is linear, Green-function
  techniques enable one to write the field of a charged point particle
  on Minkowski spacetime as an integral over the particle's history.
  By introducing the notion of timelike worldlines that are ``bounded
  away from the backward light-cone'' we are able to prescribe
  criteria for the convergence of such integrals.  \cRed{We} also
  exhibit a timelike worldline yielding singular fields on a lightlike
  hyperplane in spacetime. In this case the field is mildly singular
  at the event where the particle crosses the hyperplane.  { Even in
    the case when the Bopp­Podolsky field is bounded, it exhibits a
    directional discontinuity as one approaches the point particle.
    We describe a procedure for assigning a value to the field on the
    particle worldline which enables one to define a finite Lorentz
    self-force.}  \cRed{This is explicitly derived leading to an
    integro-differential equation for the motion of the particle in an
    external electromagnetic field. We conclude that any worldline
    solutions to this equation belonging to the categories discussed
    in the paper have continuous 4-velocities.}
\end{abstract}

\vspace{2pc}

\noindent{\it Keywords}: Self-force, Radiation reaction, 
Higher-order electrodynamics, Bopp-Podolsky theory, 
  Stress-energy-momentum tensors, Lorentz force, Abraham-Lorentz-Dirac.

\section{Introduction}\label{sec:intro}

For many applications it is reasonable to model
moving charges in terms of classical charged
point particles. In accelerator physics, for example,
it is usually neither desirable nor feasible to 
model particle beams in terms of extended classical
charged bodies or of quantum matter.
Therefore a mathematically consistent theory of
classical charged point particles is of high
relevance.

Unfortunately, such a theory does not exist so far. Of course, there
is no problem as long as we restrict to a classical charged
\emph{test} particle \cRed{and} neglect the particle's \cRed{self-interaction}. Then
the equation of motion is just the \cRed{relativistic generalization
  of Newton's equation of motion with the rate of change of particle
  momentum equated to the Minkowski (relativistic) Lorentz force} in a
given external field and everything is fine. If, however, the
self-field is taken into account, the theory becomes
pathological. According to the \cRed{Maxwell-Lorentz theory} in vacuo,
the electromagnetic field of a point charge becomes infinite at the
position of the charge, so the particle experiences an infinite
self-force.  This infinity is so bad that the field energy in an
arbitrarily small ball around the source is infinite which leads to an
infinite term in the equation of motion of a point charge.
\citeasnoun{Dirac1938} suggested to counter-balance this infinity by
postulating that the point charge carries a negative infinite ``bare
mass'' which leads to the Abraham-Lorentz-Dirac equation.  Even if one
accepts this ad-hoc idea of an infinite bare mass, the problem has not
been solved. The Abraham-Lorentz-Dirac equation is known to possess
unphysical behavior such as run-away solutions and
pre-acceleration. \cRed{Reviews} of this dilemma, including detailed
accounts of the history, \cRed{can be found in the comprehensive
  monographs by} \citeasnoun{Rohrlich2007} and \citeasnoun{Spohn2007}.

Since the \cRed{Maxwell-Lorentz theory with point charges} in vacuo
does not lead to a consistent equation of motion of charged point
particles, one might think about modifying this theory. In the course
of history at least two such modifications have been suggested, both
motivated by the desire of solving the problem of an infinite
self-force, namely the Born-Infeld theory and the Bopp-Podolsky
theory. The Born-Infeld theory is by far the better known of the
two. This theory, which was suggested by \citeasnoun{BornInfeld1934},
modifies the \cRed{source free} Maxwell vacuum theory by introducing
non-linearities containing a new hypothetical constant of Nature $b$
with the dimension of a (magnetic) field strength.  For $b \to \infty$
the \cRed{Maxwell equations in vacuo are} recovered. The fact that the
Maxwell theory is very well verified by many experiments is in
agreement with the Born-Infeld theory as long as $b$ is sufficiently
large. By contrast, the Bopp-Podolsky theory retains linearity but
introduces higher-derivative terms proportional to a factor $\ell ^2$
where $\ell$ is a new hypothetical constant of Nature with the
dimension of a length.  Again, for $\ell \to 0$ the \cRed{Maxwell-Lorentz
  theory} is recovered. The Bopp-Podolsky theory
was first suggested by \citeasnoun{Bopp1940}.  It was independently
rediscovered by \citeasnoun{Podolsky1942}. Both Bopp and Podolsky
formulated their theory in terms of an action functional and then
derived the field equation which is of fourth order in the
electromagnetic potential. As noted by both Bopp and Podolsky, this
fourth-order equation is equivalent to a pair of second-order
equations in a certain gauge. If rewritten in this form, the
Bopp-Podolsky field system coincides with those of a theory suggested
by \citeasnoun{LandeThomas1941}.  Similar to the Born-Infeld theory,
the Bopp-Podolsky theory was first formulated as a classical field
theory but with the intention of deriving a quantum version later.  In
particular, Podolsky pursued both the classical and the quantum
aspects of the theory in several follow-up articles with different
co-authors, see \citeasnoun{PodolskyKikuchi1944},
\citeasnoun{PodolskyKikuchi1945} and \citeasnoun{PodolskySchwed1948}.
In the present article we are interested only in the classical theory.

In both the Born-Infeld theory and the Bopp-Podolsky theory the
self-field is \cRed{bounded} for a \emph{static} point charge, i.e., for a
point charge that is at rest in some inertial system in Minkowski
spacetime. This was shown already in the earliest articles on these
theories. Moreover, in both theories for such a charge the field
energy in a ball of radius $R$ around the charge is finite, even in
the limit $R \to \infty$.  To the best of our knowledge in the
Born-Infeld theory little is known regarding such finiteness for
accelerated point charges.  In the Bopp-Podolsky theory the only
result about accelerated point charges that we are aware of is due to
\citeasnoun{Zayats2013}, who showed that the self-force is finite for a
uniformly accelerated particle on Minkowski spacetime.

It is the purpose of this paper to add some results
on the finiteness of the self-force in the 
Bopp-Podolsky theory. In our view, these results 
give strong support to the idea that the Bopp-Podolsky
theory provides a consistent theory of 
classical charged point particles including the 
self-force. In Section \ref{sec:bp} we
briefly review the basic field equations of the Bopp-Podolsky
theory on Minkowski spacetime, emphasizing the fact
that because of their linearity Green-function techniques
can be used. In Section \ref{sec:pointoff} we restrict 
the equations of Section \ref{sec:bp}  to the case \cRed{where}
the source of the electromagnetic field is a point charge
with a prescribed worldline on Minkowski spacetime. We discuss
various ways of writing the field strength at a point 
\emph{off the worldline} as an integral over the particle's 
history. As a first example, we treat the simple case of a 
point charge that is at rest in an inertial system.
In Section \ref{sec:pointon} \cRed{the general problem of assigning a
  value of the field strength {\em on the worldline} is
  discussed. This is a precursor to the formulation of an
  integro-differential equation for the motion of a point charge that
  accommodates its finite self-force in an external electromagnetic field.}
As a second example, we treat
the case of a uniformly accelerated charge. In 
Section \ref{sec:finite} we present our main results
on the finiteness of the field and of the self-force \cRed{for general
 motion}. 
We show that the self-force is finite unless the 
worldline approaches the light-cone in the past in a
very contrived manner. As  a third example, we discuss
such a pathological worldline where the self-field
actually diverges on a lightlike hyperplane and the
self-force diverges at one point on the worldline.
However, we demonstrate that even in this case the 
singularity of the field is so mild that it does
not cause a problem for the equation of motion.     
In Section \ref{sec:LD}  we discuss 
how the Abraham-Lorentz-Dirac equation comes about in
a particular limit as $\ell\to 0$, after classical mass renormalization.

In the body of the paper we formulate the Bopp-Podolsky theory on
Minkowski spacetime in an inertial coordinate system. However, we have
added an appendix where we consider the Bopp-Podolsky theory on a
\emph{curved} spacetime in arbitrary coordinates. This allows us to
derive the dynamical (Hilbert) electromagnetic stress-energy-momentum
tensor of the theory using exterior calculus.  The appendix also
includes a derivation of the \cRed{relativistic} Lorentz force
\cRed{in the Bopp-Podolsky theory} from this tensor which is crucial
for our reasoning in the body of the paper.

\section{Bopp-Podolsky theory}\label{sec:bp}

\vspace{0.1cm}

We consider \cRed{a time and space oriented} Minkowski spacetime with
standard inertial coordinates $\boldsymbol{x}=(x^0,x^1,x^2,x^3)$,
\cRed{with metric tensor}
\begin{equation}\label{eq:eta}
g = 
\eta _{ab} dx^a \otimes dx^b
\end{equation}
where $\big( \eta _{ab} \big) = \mathrm{diag} (-1,1,1,1)$.
\cRed{In this article all tensor field components on Minkowski
  spacetime are with respect to the class of global parallel bases
  adapted to these coordinates. Members of this class are related by
  elements of the proper Lorentz group, $SO(3,1)$.} Here and in the following, Einstein's summation convention 
is used for latin indices which take values 
$0,1,2,3$ and for greek indices which take values $1,2,3$. 
Latin indices are lowered and raised with $\eta _{ab}$
and with its inverse $\eta ^{ab}$, respectively. We use units
in which the \cRed{the speed of light} $c$ equal to 1.

The higher-order electrodynamics suggested by \citeasnoun{Bopp1940}
and, independently, by \citeasnoun{Podolsky1942} is based on 
the gauge invariant action functional
\begin{equation}\label{eq:Lag}
S[A]  \, = \, 
\int_{\Man} \left( \frac{1}{16 \pi} \, F^{ab}F_{ab} \, + \,
\frac{\ell ^2}{16 \pi} \, \partial ^c F^{ab} \partial _c F_{ab}
\, - \, A_a j^a \right) d^4x \, 
\end{equation}
\cRed{where $\Man$ is some compact region of Minkowski spacetime
  yielding a finite $S[A]$.
}
Here $A_a$ is the electromagnetic potential
\begin{equation}\label{eq:FA}
F_{ab} \, = \, \partial _aA_b \, - \, \partial _b A_a
\end{equation}
is the electromagnetic field strength, $j^a$ is a conserved current
density 4-vector field,  \cRed{$\{\partial_a=\partial/\partial x^a\}$} and $\ell$ is a hypothetical new constant 
of nature with the dimension of a length. \cRed{It proves expedient to
  derive the Bopp-Podolsky field equations in terms of smooth fields
  and a smooth current source on $\Man$ and then discuss particular
  singular solutions associated with a source having support on a
  timelike worldline. This eliminates the need to perform variations
  of (\ref{eq:Lag}) in a distributional context.} 
Note that\cRed{, for deriving the field equations by variational
  methods,} the action 
functional (\ref{eq:Lag}) can be equivalently replaced with
\begin{equation}\label{eq:Lag2}
\tilde{S}[A]  \, = \, 
\int_{\Man} \left( \frac{1}{16 \pi} \, F^{ab}F_{ab} \, + \,
\frac{\ell ^2}{8 \pi} \, \partial _a F^{ab} \partial ^c F_{cb}
\, - \, A_a j^a \right) d^4x  
\end{equation}
because the integrands differ only by a total divergence. 

The field equations of the Bopp-Podolsky theory result from 
varying the action functional (\ref{eq:Lag}) or (\ref{eq:Lag2})
with respect to the potential. They read
\begin{equation}\label{eq:bpF}
\partial ^b F_{ba}  - \ell ^2 \square \partial ^b F_{ba}  
\, = \, - \, 4 \, \pi \, j_a 
\end{equation}
or, in terms of the potential \cRed{in the  Lorenz gauge
$\partial^b A_b \, = \, 0$,}
\begin{equation}\label{eq:bp}
\square A_a - \ell ^2 \square ^2 A_a  \, = 
\, - \, 4 \, \pi \, j_a 
\; 
\end{equation}
where $\square  \, = \, \partial ^b  \partial_b$
is the wave operator. 
Field equations involving the operator 
$\square ^2$ have also been investigated by \citeasnoun{PaisUhlenbeck1950}, 
cf. \citeasnoun{Pavlopoulos1967}.

Both Bopp and Podolsky observed that the fourth-order differential
equation (\ref{eq:bp}) can be reduced to a pair of second-order
differential equations. More precisely, (\ref{eq:bp}) is equivalent
to 
\begin{equation}\label{eq:bphat}
\square \hat{A}{}_a  \, = \, - \, 4 \, \pi \, j_a 
\; , \qquad \partial _a \hat{A}{}^a \, = \, 0 \, ,
\end{equation}
\begin{equation}\label{eq:bptilde} 
\square \tilde{A}{}_a \, - \, \ell ^{-2} \tilde{A}{}_a
\, = \, - \, 4 \, \pi \, j_a 
\; , \qquad \partial _a \tilde{A}{}^a \, = \, 0 \, .
\end{equation}
This can be demonstrated in the following way. Assume we have a
solution $A_a$ to (\ref{eq:bp}). Then we define
\begin{equation}\label{eq:Adef} 
\hat{A}{}_a \, = \, A_a \, - \, \ell ^2 \square A_a
\; , \qquad 
\tilde{A}{}_a \, = \, - \, \ell ^2 \square A_a
\end{equation}
and it is readily verified that (\ref{eq:bphat}) and (\ref{eq:bptilde})
are indeed true. Conversely, assume that we have solutions
$\hat{A}{}_a$ and $\tilde{A}{}_a$ to (\ref{eq:bphat}) and (\ref{eq:bptilde}),
respectively. Then we define
\begin{equation}\label{eq:Acon} 
A_a \, = \, \hat{A}{}_a \, - \, \tilde{A}{}_a
\end{equation}
and it is readily verified that (\ref{eq:bp}) is true. This gives a one-to-one
relation between solutions to (\ref{eq:bp}) and pairs of solutions
to (\ref{eq:bphat}) and (\ref{eq:bptilde}) which allows \cRed{one} to view the 
Bopp-Podolsky theory as equivalent to a theory based on the two
equations (\ref{eq:bphat}) and (\ref{eq:bptilde}). The latter was 
suggested, shortly after Bopp but independently of him and shortly before
Podolsky, by \citeasnoun{LandeThomas1941}. In a quantized
version of the Land{\'e}-Thomas theory, (\ref{eq:bphat}) describes the 
usual (massless) photon while (\ref{eq:bptilde}) describes a hypothetical
``massive photon'' whose Compton wave length is equal to the new constant of
nature $\ell$. 

\cRed{To find} the retarded solution to the fourth-order
Bopp-Podolsky equations (\ref{eq:bp}), for any given divergence-free 
source $j_a$, we use the reduction to the second-order equations
(\ref{eq:bphat}) and (\ref{eq:bptilde}) and then apply standard
Green-function techniques. The details were worked out \cRed{by}
Land{\'e} and Thomas, see Section 9 in \cite{LandeThomas1941}.  The
retarded solution to
\begin{equation}\label{eq:Ghat} 
\square \hat{G} (\boldsymbol{x}-\boldsymbol{y})
\, = \, - \, \delta (\boldsymbol{x}-\boldsymbol{y})
\end{equation}
is
\begin{equation}\label{eq:Greenhat} 
\hat{G} (\boldsymbol{x}-\boldsymbol{y})
=
\left\{
\begin{array}{ll}
(2 \pi )^{-1} \delta \big( D(\boldsymbol{x}-\boldsymbol{y})^2 \big) 
\qquad & \textup{if} \ \boldsymbol{y} < \boldsymbol{x} \, , 
\\[0.4cm]
\qquad 0 & \textup{otherwise} \, ,
\end{array}
\right. 
\end{equation}
and the retarded solution to 
\begin{equation}\label{eq:Gtilde} 
\square \tilde{G} (\boldsymbol{x}-\boldsymbol{y}) 
\, - \, \ell ^{-2} \tilde{G} (\boldsymbol{x}-\boldsymbol{y})
\, = \, - \, \delta (\boldsymbol{x}-\boldsymbol{y})
\end{equation}
is
\begin{equation}\label{eq:Greentilde} 
\tilde{G} (\boldsymbol{x}-\boldsymbol{y})
\, = \, \left\{
\begin{array}{ll}
\displaystyle\!\!
(2 \pi )^{-1} \delta \big( D(\boldsymbol{x}-\boldsymbol{y})^2 \big)
\, - \, 
\frac{
J_1 \big( \ell^{-1} D(\boldsymbol{x}-\boldsymbol{y}) \big)
}{
4 \pi \ell D(\boldsymbol{x}-\boldsymbol{y})}
& \textup{if} \: \, \boldsymbol{y} < \boldsymbol{x} \, ,
\\[0.4cm]
\qquad  0 &\textup{otherwise} \, .
\end{array}
\right. 
\end{equation}
Here and in the following, $J_n$ is the $n$th order Bessel function
of the first kind, $\boldsymbol{y}<\boldsymbol{x}$ means that 
$\boldsymbol{y}$ is in the chronological past of $\boldsymbol{x}$ and 
$D(\boldsymbol{x}-\boldsymbol{y})$ is the Lorentzian distance between 
these two events,
\begin{equation}\label{eq:defD} 
D ( \boldsymbol{x}-\boldsymbol{y}) 
\, = \, \sqrt{- \eta_{ab}(x^a-y^a)(x^b-y^b)} \; .
\end{equation}
Hence, the retarded solution to (\ref{eq:bphat}) is
\begin{equation}\label{eq:Ahatret} 
\hat{A}{}_a(x) \, = \, 4 \pi 
\int _{\mathbb{R}^4} \hat{G}(\boldsymbol{x}-\boldsymbol{y})
\, j_a (\boldsymbol{y}) \, d^4\boldsymbol{y} \, , 
\end{equation}
the retarded solution to (\ref{eq:bptilde}) is
\begin{equation}\label{eq:Atilderet} 
\tilde{A}{}_a(\boldsymbol{x}) \, = \, 4 \pi 
\int _{\mathbb{R}^4} \tilde{G}(\boldsymbol{x}-\boldsymbol{y})
\, j_a (\boldsymbol{y}) \, d^4\boldsymbol{y} \, , 
\end{equation}
and the retarded solution to (\ref{eq:bp}) is
\begin{eqnarray}
A_a(\boldsymbol{x}) & = & 4 \pi \! \!
\int _{\mathbb{R}^4} \! \!
\Big( \hat{G}(\boldsymbol{x}-\boldsymbol{y})
-
\tilde{G}(\boldsymbol{x}-\boldsymbol{y}) \Big)
\, j_a (\boldsymbol{y}) \, d^4\boldsymbol{y} 
\nonumber
\\\label{eq:Aret} 
&= &
\int _{\boldsymbol{y}<\boldsymbol{x}} \! \! \!
\frac{J_1 
\big( \ell ^{-1} D(\boldsymbol{x}-\boldsymbol{y}) \big)
}{\ell D(\boldsymbol{x}-\boldsymbol{y})} \,
j_a ( \boldsymbol{y}) \, d^4 \boldsymbol{y} \, .
\end{eqnarray}
The Lorenz gauge condition is satisfied by $\hat{A}{}_a$, 
$\tilde{A}{}_a$ and $A_a $ when the current density satisfies 
the continuity equation
\begin{equation}\label{eq:cont} 
\partial _a j^a  \, = \, 0 \, .
\end{equation}
\cRed{Mathematically} $\hat{A}{}_a $ is the retarded potential of the
standard Maxwell theory.  For $\ell \to 0$ we have $\tilde{G} \to 0$
and hence $\tilde{A}{}_a \to 0$, so in this limit the standard Maxwell
theory is recovered, as is obvious from (\ref{eq:Lag}).

\cRed{Equation} (\ref{eq:Aret}) can be viewed as a map that assigns to
each current density $\boldsymbol{j}$ the corresponding retarded
Bopp-Podolsky potential $\boldsymbol{A}$. A general framework for
investigating the question of whether this map is well-defined would
be to assume that $\boldsymbol{j}$ is a (tempered) distribution and to
ask if $\boldsymbol{A}$ is again a (tempered) distribution. As the
Green function $\hat{G}-\tilde{G}$ does not satisfy a fall-off
condition in all spacetime directions, this is a non-trivial
question. In this paper we restrict to a more specific question.  We
assume that $\boldsymbol{j}$ is the current density associated with a
point charge, and investigate \cRed{whether} the integral on the right-hand
side of (\ref{eq:Aret}) converges for events $\boldsymbol{x}$ in the
\cRed{future} of the worldline of the charge. If this is the
case, the left-hand side of (\ref{eq:Aret}) is, of course, well
defined, not only as a distribution but even as a function.

Up to now we have discussed only how the electromagnetic field can be
calculated from its source, i.e., from the current that generates
the field. We also need the equation for the Minkowski Lorentz force
density which  arises from the divergence of \cRed{an electromagnetic} 
stress-energy-momentum tensor $T_{ab}$. 
In inertial coordinates on Minkowski spacetime, it reads
\begin{equation}\label{eq:Lorentz}
\partial ^c T_{ca} = F_{ab}\,j^b \, .
\end{equation}
This equation \cRed{defines} the \cRed{Minkowski} force density that
the field $F_{ab}$ exerts on the current $j^b$. In the case that the
current is concentrated on a worldline, it gives the self-force
\cRed{in terms of field components on the worldline}. A
derivation of (\ref{eq:Lorentz}) in terms of the
stress-energy-momentum tensor $T_{ab}$ associated with the
Bopp-Podolsky theory, is given in the Appendix.

\section{The field of a point charge off the worldline}\label{sec:pointoff}

\vspace{0.2cm}

Let $\boldsymbol{\xi} (\tau)= \big( \xi ^0 (\tau), \xi ^1 (\tau), 
\xi ^2 (\tau), \xi ^3 (\tau)\big)$ be an inextendible timelike 
$C^{\infty}$ curve parametrized
by proper time,
\begin{equation}\label{eq:tau} 
\dot{\xi}{}^a ( \tau ) \, \dot{\xi}{}_a ( \tau) \, = \, - \, 1 \, .
\end{equation}
\cRed{
  Inextendible curves have parametrizations with $\tau$ belonging to an
  open interval  $\; ]\,\tau _{\mathrm{min}} , \tau _{\mathrm{max}} \,
  [\;$ where $\tau _{\mathrm{min}}\in\mathbb{R}\cup\{-\infty\}$ and 
  $\tau _{\mathrm{max}}\in\mathbb{R}\cup\{+\infty\}$.}\footnote{
\cRed{Some curves with unbounded acceleration reach past or future
infinity in a finite proper time.}}


\begin{figure}
\psfrag{xi}{\makebox(0,0)[r]{$\boldsymbol{\xi}(\tau)$}}
\centerline{\includegraphics[width=6cm]{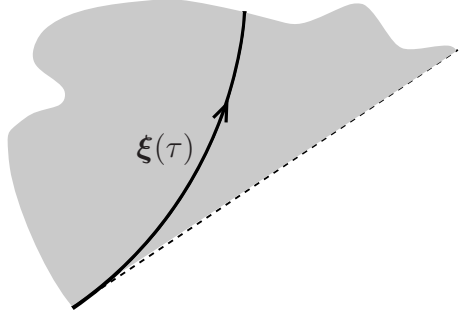}}
\caption{\cRed{When the future pointing timelike worldline
    $\boldsymbol{\xi}(\tau)$ approaches a lightlike line (shown dashed)
    as $\tau\to\tau_{\min}$, its future is the subset of
    Minkowski spacetime, partially shown in grey above, bounded by a
    lightlike hyperplane containing the dashed line. The grey domain only
    consists of  events which can be reached from the worldline along
a future-oriented timelike curve.}
}
\label{fig:future}
\end{figure}

Consider the \cRed{future} of \cRed{$\boldsymbol{\xi}(\tau)$}, i.e., 
the set of all events that can be reached from the worldline along a 
future-oriented timelike curve. If the worldline approaches a
light-cone asymptotically for $\tau \to \tau _{\mathrm{min}}$, 
its future is bounded by a lightlike hyperplane, 
see Fig.~\ref{fig:future}; otherwise it is all of $\mathbb{R}^4$. 
To each $\boldsymbol{x}$, in the future of
$\boldsymbol{\xi}(\tau)$, but not on the worldline,
we assign the retarded time $\tau _R 
( \boldsymbol{x} )$, defined by the properties that
\begin{equation}\label{eq:tauR}
\Big( x^a - \xi ^a \big( \tau _R ( \boldsymbol{x} ) \big) \Big)  
\Big( x_a - \xi _a \big( \tau _R ( \boldsymbol{x} ) \big) \Big) = 0
\, , \qquad
x^0 > \xi ^0 \big( \tau _R ( \boldsymbol{x} ) \big) \, , 
\end{equation}
and introduce the retarded distance 
\begin{equation}\label{eq:rR}
r_R ( \boldsymbol{x} ) \, = \, - \, 
\dot{\xi}{}^a \big( \tau _R ( \boldsymbol{x} ) \big) 
\Big( x_a - \xi _a \big( \tau _R ( \boldsymbol{x} ) \big) \Big) \; .
\end{equation}
\cRed{Unless otherwise specified, in the following, all events
$\boldsymbol{x}$ will belong to the future of the
worldline. However, whether such events lie on or off the worldline
will be made explicit.}
For events $\boldsymbol{x}$ off the worldline one has
\begin{equation}\label{eq:n1}
x^a  \, = \, \xi ^a \big( \tau _R ( \boldsymbol{x} ) \big) \, + \,
r_R ( \boldsymbol{x} ) \Big( 
\dot{\xi}{}^a \big( \tau _R ( \boldsymbol{x} ) \big)
+ n^a ( \boldsymbol{x} ) \Big)
\end{equation}
with a well-defined spatial unit vector $\boldsymbol{n} (
\boldsymbol{x} )$, 
\begin{equation}\label{eq:n2}
n^a ( \boldsymbol{x} ) n_a ( \boldsymbol{x} ) \, = \, 1 \, ,
\qquad \quad
\dot{\xi}{}^a \big( \tau _R (\boldsymbol{x}) \big) 
n_a ( \boldsymbol{x} ) \, = \, 0 \, ,
\end{equation}
see Fig.~\ref{fig:retardation}. By differentiation, we find
\begin{equation}\label{eq:dtauR}
\partial _b \tau _R ( \boldsymbol{x} ) 
\, = \, - \, \dot{\xi}{}_b \big( \tau _R ( \boldsymbol{x} ) \big)
\, - \, n_b ( \boldsymbol{x} )
\end{equation}
and
\begin{equation}\label{eq:drR}
\partial _b r _R ( \boldsymbol{x} ) 
\, = \, n_b ( \boldsymbol{x} ) \, + \, 
r_R ( \boldsymbol{x} ) 
\ddot{\xi}{}^a \big( \tau _R ( \boldsymbol{x} ) \big)
n_a ( \boldsymbol{x} ) \Big(
\dot{\xi}{}_b \big( \tau _R ( \boldsymbol{x} ) \big)
+
n_b ( \boldsymbol{x} ) \Big) \, .
\end{equation}
For a sequence \cRed{of events} $\boldsymbol{x} _N$
that approaches the worldline, $\boldsymbol{x} _N \to 
\boldsymbol{ \xi} (\tau _0 )$ as $N \to \infty$, 
$\tau _R ( \boldsymbol{x} _N) \to \tau _0$ and $r_R ( \boldsymbol{x} _N) 
\to 0$. \cRed{The limits for}  
$\boldsymbol{n} ( \boldsymbol{x} _N)$,
$\partial _b \tau _R ( \boldsymbol{x} _N)$ 
and $\partial _b r _R ( \boldsymbol{x} _N)$ \cRed{do not exist as $N\to\infty$}.

\cRed{We model}
a point charge with worldline $\boldsymbol{\xi} (\tau)$
\cRed{by the distributional current density}
\begin{equation}\label{eq:jpoint}
j_a( \boldsymbol{x} )  \, = \, 
q \int _{\tau _{\mathrm{min}}} ^{\tau _{\mathrm{max}}}
\delta \big( \boldsymbol{x}- \boldsymbol{\xi} ( \tau) \big) \, 
\dot{\xi}{}_a ( \tau ) \, d \tau 
\end{equation}
where $q$ is \cRed{its electric} charge. Then (\ref{eq:Ahatret}) gives 
the standard Li{\'e}nard-Wiechert potential,
\begin{equation}\label{eq:LW}
\hat{A}{}_a  ( \boldsymbol{x} ) \, = \,  
\frac{
q \, \dot{\xi}{}_a \big( \tau _R ( \boldsymbol{x} ) \big)
}{
r_R( \boldsymbol{x} )
} \, , 
\end{equation}
and (\ref{eq:Aret}) reads
\begin{equation}\label{eq:Apoint1}
A_a  ( \boldsymbol{x} ) \, = \, 
\frac{q}{\ell} \, 
\int _{\tau _{\mathrm{min}}} ^{\tau _R ( \boldsymbol{x} )}
{\frac{
J_1 \Big(\ell ^{-1} 
D\big( \boldsymbol{x}- \boldsymbol{\xi} ( \tau ) \big) \Big)
}{
D \big( \boldsymbol{x} - \boldsymbol{\xi} ( \tau ) \big)
}}
\, \dot{\xi}{} _a ( \tau ) \, d \tau 
\; .
\end{equation}
Both (\ref{eq:LW}) and (\ref{eq:Apoint1}) \cRed{are defined} 
for \cRed{events $\boldsymbol{x}$ off the worldline}.
In Section \ref{sec:pointon} we investigate what happens
\cRed{to the potentials and fields as} $\boldsymbol{x}$ approaches the worldline. 
\begin{figure}[h]
    \psfrag{x}{{$\boldsymbol{x}$}} 
    \psfrag{p}{{$\,$ \hspace{-0.4cm} $\boldsymbol{\xi} \big( \tau _R ( \boldsymbol{x}) \big)$}} 
    \psfrag{q}{{$\,$ \hspace{-0.8cm} $r_R ( \boldsymbol{x} ) \, \dot{\boldsymbol{\xi}} \big( \tau _R (\boldsymbol{x}) \big)$}} 
    \psfrag{n}{{$r_R ( \boldsymbol{x} ) \, \boldsymbol{n} ( \boldsymbol{x} ) $}} 
\centerline{\includegraphics[width=6.5cm]{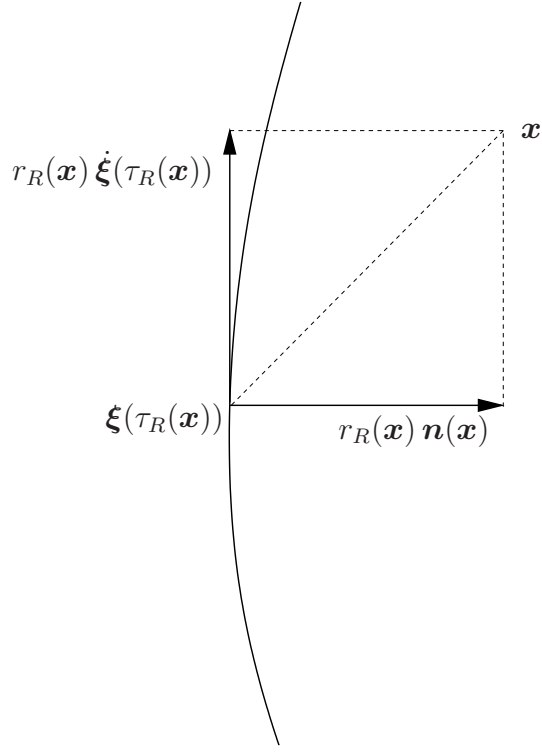}}
\caption{Retarded time and retarded distance}
\label{fig:retardation}
\end{figure}

Note that, in contrast to the Li{\'e}nard-Wiechert potential
(\ref{eq:LW}), the potential (\ref{eq:Apoint1}) at an event
$\boldsymbol{x}$ depends on the whole history of the charge from $\tau
= \tau _{\mathrm{min}}$ up to $\tau = \tau _R ( \boldsymbol{x}
)$. (The same is true, in general, \cRed{in} the standard
\cRed{Maxwell-Lorentz theory with point charges} on curved
spacetimes.) The integrand in (\ref{eq:Apoint1}) is bounded for $\tau
\to \tau _R ( \boldsymbol{x} )$ because
\begin{equation}\label{eq:tautotauR}
\frac{
J_1 \Big(\ell ^{-1} 
D\big( \boldsymbol{x}- \boldsymbol{\xi} ( \tau ) \big) \Big)
}{
D \big( \boldsymbol{x} - \boldsymbol{\xi} ( \tau ) \big)
}
\, \dot{\xi}{} _a ( \tau ) \, \longrightarrow \,
\frac{
\dot{\xi}{}_a \big( \tau _R ( \boldsymbol{x} ) \big)
}{
2 \ell
}
\qquad \textup{for} \quad 
\tau \to \tau _R ( \boldsymbol{x} )
\end{equation}
where we have used the Bernoulli-l'H{\^o}pital rule and 
$J_1'(0) = 1/2$. By contrast, for $\tau \to \tau _{\mathrm{min}}$,
the individual components $\dot{\xi}{}_a ( \tau )$ may blow
up arbitrarily. Therefore, the existence of the integral on the
right-hand side of (\ref{eq:Apoint1}) is not guaranteed. We show
later that, for a fairly large class of worldlines, this
integral does converge even absolutely, as a Lebesgue integral
or as an improper Riemann integral, for all $\boldsymbol{x}$ in 
the future of the worldline; however, we also 
give a (contrived) example where it does \emph{not} converge for 
some $\boldsymbol{x}$ in the future of the worldline.

On the assumption that the integral converges, at all events
$\boldsymbol{x}$ \cRed{we} can differentiate (\ref{eq:Apoint1})
\cRed{off the worldline} with respect to $x^b$.  Antisymmetrizing the
resulting expression gives the field strength (\ref{eq:FA}),
\begin{eqnarray}
\label{eq:Fpoint0}
\fl
F_{ab}  ( \boldsymbol{x} ) & = & 
\frac{q}{2 \ell ^2} \, 
\Big( \dot{\xi}{}_b \big( \tau _R ( \boldsymbol{x} ) \big)
n_a ( \boldsymbol{x} ) 
- 
\dot{\xi}{}_a \big ( \tau _R ( \boldsymbol{x} ) \big) 
n_b ( \boldsymbol{x} ) \Big) 
\\
\nonumber
\fl&&- \, \frac{q}{\ell ^2} 
\int _{\tau _{\mathrm{min}}} ^{\tau _R ( \boldsymbol{x} )}
\frac{
J_2 \Big(\ell ^{-1} 
D\big( \boldsymbol{x}- \boldsymbol{\xi} ( \tau ) \big) \Big)
}{
D \big( \boldsymbol{x} - \boldsymbol{\xi} ( \tau ) \big) ^2
}
\, \Big( \big( x_b -\xi _b ( \tau  ) \big) \dot{\xi}{}_a ( \tau)
-
\big( x_a -\xi _a ( \tau  ) \big) \dot{\xi}{}_b ( \tau) \Big)
\, d \tau
\; .
\end{eqnarray}
Here we have used (\ref{eq:dtauR}) and the identities 
$2 J_1 (z) = z \big( J_0 (z) + J_2 (z) \big)$ and 
$2 J_1' (z) =  J_0 (z) - J_2 (z)$ of the Bessel
functions. Again, we postpone the discussion of what happens 
if $\boldsymbol{x}$ approaches the worldline to Section \ref{sec:pointon}.

We observe that, keeping $\boldsymbol{x}$ fixed, we may use 
$\zeta= D \big( \boldsymbol{x} - \boldsymbol{\xi} (\tau) \big)$ 
as the parameter along the worldline. Indeed, differentiation of 
the equation
\begin{equation}\label{eq:zeta}
\zeta ^2 \, = \, - \, 
\big( x^a- \xi ^a ( \tau ) \big) \big( x_a- \xi _a ( \tau ) \big) 
\end{equation} 
yields
\begin{equation}\label{eq:zetatau}
\zeta \, d \zeta \, = \,  
\dot{\xi}{}^a ( \tau )  \big( x_a- \xi _a ( \tau ) \big) d \tau. 
\end{equation} 
As, by the \cRed{reverse} Schwartz inequality for timelike future-oriented
vectors,
\begin{equation}\label{eq:schwarz}
\dot{\xi}{}^a ( \tau ) \big( x_a- \xi _a ( \tau ) \big) 
< - \zeta < 0 \qquad  \textup{for} \quad \,  0 < \zeta < \infty \;,  
\end{equation} 
$\zeta$ is monotonically decreasing along the worldline. This
guarantees that the equation $\zeta = D
\big( \boldsymbol{x}- \boldsymbol{\xi} ( \tau ) \big)$ can
be solved for $\tau$,
\begin{equation}\label{eq:solvetau}
\zeta = D\big( \boldsymbol{x}- \boldsymbol{\xi} ( \tau ) \big)
\qquad \Longleftrightarrow \qquad 
\tau = \varpi ( \zeta , \boldsymbol{x} ) \; .
\end{equation} 
\cRed{As} proper time $\tau$ runs from \cRed{$\tau_{\min}$} to $\tau
_R (x)$, the new parameter $\zeta$ runs (backwards) from $\infty$ to
0. Hence \cRed{at all events $\boldsymbol{x}$} (\ref{eq:Apoint1}) can
be rewritten as
\begin{equation}\label{eq:Apoint2}
A_a  (\boldsymbol{x}) \, = \, - \, \frac{q}{\ell} \, 
\int _0 ^{\infty}
\left.
\frac{J_1 \big( \zeta / \ell  \big)
\, \dot{\xi}{} _a ( \tau )
}{
\dot{\xi}{}^b (  \tau ) 
\big( x_b- \xi _b (  \tau ) \big)} 
\right| _{\tau = \varpi ( \zeta, \boldsymbol{x} )}
 \, d \zeta 
\; .
\end{equation}
Note that, if we \cRed{regard} $\boldsymbol{x}$ as a parameter, (\ref{eq:Apoint2}) has the
form of a Hankel transform which transforms a function of $\zeta$ to a 
function of $1/\ell$. 

We may use the parameter $\zeta$ in the formula for the field strength
as well. If such a change of the integration variable is performed on
the right-hand side of (\ref{eq:Fpoint0}), the resulting equation reads
\begin{eqnarray}\label{eq:Fpoint1}
\fl F_{ab}  ( \boldsymbol{x} ) & = &  
\frac{q}{2 \ell ^2} \, 
\Big( \dot{\xi}{}_b \big( \tau _R ( \boldsymbol{x} ) \big)
n_a ( \boldsymbol{x} ) 
- 
\dot{\xi}{}_a \big ( \tau _R ( \boldsymbol{x} ) \big) 
n_b ( \boldsymbol{x} ) \Big) 
\\[0.2cm]
\nonumber
\fl &&+ \, \frac{q}{\ell ^2} 
\int _{0} ^{\infty}
\left.
\frac{
\Big( \big( x_b -\xi _b ( \tau  ) \big) \dot{\xi}{}_a ( \tau)
-
\big( x_a -\xi _a ( \tau  ) \big) \dot{\xi}{}_b ( \tau) \Big)
}{
\dot{\xi}{}^c ( \tau) 
\big( x_c -\xi _c ( \tau  ) \big)
} 
\right|
_{\tau = \varpi ( \zeta , \boldsymbol{x} )}
\frac{
J_2 (\zeta / \ell ) \, d \zeta
}{
\zeta
}
\; .
\end{eqnarray}
\cRed{An} alternative expression for the field strength \cRed{is
  obtained if, off
  the worldline,}
we differentiate (\ref{eq:Apoint2}) with respect to $x^b$
and antisymmetrize,
\begin{eqnarray}\label{eq:Fpoint2}
\fl F_{ab}  (\boldsymbol{x}) & = & 
\frac{q}{\ell} \, 
\int _0 ^{\infty}
\left.
\frac{
J_1 \big( \zeta / \ell  \big)
\Big( \ddot{\xi}{} _a ( \tau )
\big( x_b - \xi _b ( \tau ) \big)
-
\ddot{\xi}{} _b ( \tau )
\big( x_a - \xi _a ( \tau ) \big)
\Big) 
}{
\Big( \dot{\xi}{}^c (  \tau ) 
\big( x_c- \xi _c (  \tau ) \big) \Big) ^2
}
\right| _{\tau = \varpi ( \zeta, \boldsymbol{x} )}
 \! \!  d \zeta 
\\
\nonumber
\fl && \hspace{-4em}- \, \frac{q}{\ell} \! \! 
\int_0 ^{\infty}
\left.
\frac{
J_1 \big( \zeta / \ell  \big)
\!
\Big( 
\!
\dot{\xi}{} _a ( \tau )
\big( x_b - \xi _b ( \tau ) \big)
\! - \!
\dot{\xi}{} _b ( \tau )
\big( x_a - \xi _a ( \tau ) \big)
\!
\Big)
\Big( 
\!
1 + \ddot{\xi}{}^d ( \tau ) 
\big( x_b- \xi _b (  \tau ) \big) 
\!
\Big)
}{
\Big( \dot{\xi}{}^c (  \tau ) 
\big( x_c - \xi _c (  \tau ) \big) \Big) ^3
}
\right| _{\tau = \varpi ( \zeta, \boldsymbol{x} )}
\! \!  \! \! \! \! d \zeta 
\; .
\end{eqnarray}
\cRed{Alternatively (\ref{eq:Fpoint2}) can be derived directly from
  (\ref{eq:Fpoint1}) by integrating its second term by parts.}

\cRed{It is worth noting} from (\ref{eq:Apoint2}) that we can
 derive another form of the 
potential by performing an integration by parts and using 
the identity $-J_1=J_0'$ of Bessel functions. The resulting equation
\begin{eqnarray}\label{eq:Apoint3}
\fl A_a ( \boldsymbol{x}) & = & \frac{q \, 
\dot{\xi}{}_a ( \tau _R ( \boldsymbol{x} ) \big)}{r_R ( \boldsymbol{x} )} 
\\ \fl &&
 - \, 
q \, \int _0 ^{\infty} \
\frac{ 
\ddot{\xi}{}_a (  \tau )
\, - \,
\frac{\displaystyle 
\dot{\xi}{}_a (  \tau ) 
\Big( 1 + \ddot{\xi}{}^d (  \tau )  
\big( x_d - \xi _d (  \tau ) \big) \Big)
}{\displaystyle
 \dot{\xi}{}^c (  \tau ) 
\big( x_c - \xi _c (  \tau ) \big) 
} 
}{
\Big( \dot{\xi}{}^b (  \tau ) 
\big( x_b - \xi _b (  \tau ) \big) \Big) ^2
}
\left.
\rule{0em}{2em}
\right| _{\tau = \varpi ( \zeta, \boldsymbol{x} )}
\, J_0 \big( \zeta/ \ell \big) \, \zeta  \, d \zeta 
\nonumber
\end{eqnarray}
gives the deviation of the potential from the Li{\'e}nard-Wiechert potential.
With the help of standard asymptotic formulas for the Bessel function $J_0$ 
the right-hand side can be written as a power series in $\ell$.
Such asymptotic (i.e., in general non-convergent) expansions have 
been used, e.g., by \citeasnoun{Frenkel1996} (also 
see \citeasnoun{FrenkelSantos1999}) and \citeasnoun{Zayats2013}. 

\subsection*{Example 1: Charge at rest}

The simplest case one can consider is a charge \cRed{at} 
rest in an appropriately chosen inertial system; this is 
equivalent to saying that the worldline of the charge is 
a straight timelike line, 
\begin{equation}\label{eq:straight}
\xi ^a ( \tau) = V^a \, \tau 
\end{equation}
with a constant four-vector $\boldsymbol{V}$ 
satisfying $V_aV^a=-1$. In this case (\ref{eq:zeta})
is a quadratic equation,
\begin{equation}\label{eq:zetastat1}
\zeta ^2
\, = \, \tau ^2 - 2 \, \tau \big( r_R ( \boldsymbol{x} )  - 
\tau _R ( \boldsymbol{x} ) \big) \, + \, 2 \,  
\tau _R ( \boldsymbol{x} ) r_R ( \boldsymbol{x} )
+ \tau _R ( \boldsymbol{x} ) ^2 \, ,
\end{equation}
which can be easily solved for $\tau$,
\begin{equation}\label{eq:zetastat}
\varpi ( \zeta, \boldsymbol{x} ) 
\, = \, r_R ( \boldsymbol{x} ) \, + \, 
\tau _R ( \boldsymbol{x} ) \, - \, 
\sqrt{\zeta ^2 + r_R ( \boldsymbol{x} )^2}
\, .
\end{equation}
Then (\ref{eq:Apoint2}) simplifies to
\begin{equation}\label{eq:Astat1}
A_a  (\boldsymbol{x}) \, = \, - \, \frac{q \, V^a}{\ell}  \,
\int _0 ^{\infty}
\frac{J_1 \big( \zeta / \ell  \big)
\,  d \zeta
}{
\sqrt{\zeta ^2 + r_R ( \boldsymbol{x} )^2}
}
\,  = \, 
\frac{q \, V_a}{r_R (\boldsymbol{x})} \,
\Big(1-e^{-r_R (\boldsymbol{x})/\ell} \Big)
\end{equation}
which is finite for all $\boldsymbol{x}$,
\begin{equation}\label{eq:Astat2}
A_a ( \boldsymbol{x}) \,  \rightarrow \, - \,  
\frac{q \, V_a }{\ell} \qquad \textup{for} \quad 
r_R (\boldsymbol{x} ) \to 0 \, .
\end{equation}
By differentiation of (\ref{eq:Astat1}) \cRed{off the worldline}, or
equivalently by evaluation of (\ref{eq:Fpoint1}) \cRed{or
  (\ref{eq:Fpoint2}) off the worldline}, we get the field
strength
\begin{eqnarray}\label{eq:Fstat}
\fl
F_{ab} ( \boldsymbol{x}) \,  &=& \, 
\frac{q}{r_R( \boldsymbol{x} )^2} 
\,
\Big( n_a ( \boldsymbol{x} )V_b-n_b( \boldsymbol{x} )V_a \Big)
\left( \, 1 \, - \, e^{-r_R ( \boldsymbol{x} ) / \ell}
- \frac{r_R(\boldsymbol{x})}{\ell} \, e^{-r_R ( \boldsymbol{x})/\ell}
\, \right) 
\nonumber
\\
\fl&=& \, 
\frac{q}{2 \ell ^2} 
\,
\Big( n_a ( \boldsymbol{x} )V_b-n_b( \boldsymbol{x} )V_a \Big)
\Big( \, 1 \, + \, O \big( r_R ( \boldsymbol{x} )\big) \Big)
\, .
\end{eqnarray}
This expression is not defined on the worldline; if a point on the
worldline is approached, the limit \cRed{of some components} depend on
the direction \cRed{and we say} that the field displays a
\emph{directional singularity}.  \cRed{Although for events
  $\boldsymbol{x}$ on the worldline the field (\ref{eq:Fpoint1}) is
  undefined, the field (\ref{eq:Fpoint2}) can be evaluated there and
  yields $F_{ab} (\boldsymbol{x} ) = 0$ for all $a,b$. This value also
  arises by a certain averaging procedure described in the next
  section.}

In the rest system of the charge we have $V_{\mu}=0$ for 
$\mu = 1,2,3$ and $r_R ( \boldsymbol{x} ) = 
\sqrt{(x^1)^2 +(x^2)^2+(x^3)^2}$ is just the ordinary radius 
coordinate. In this coordinate system (\ref{eq:Fstat}) gives
a radial electrostatic field with \cRed{{\em modulus}}
\begin{equation}\label{eq:Estat1}
E(r) \, = \, \frac{q}{r^2} \,
\left( \, 1 \, - \, e^{-r / \ell}
- \frac{r}{\ell} \, e^{-r/\ell}
\, \right) \, = \, \frac{q}{2 \ell ^2} 
\Big( 1+O(r) \Big) \, .
\end{equation}
In contrast to the Coulomb field of the standard 
Maxwell theory, the Bopp-Podolsky $E(r)$ stays finite 
as the worldline is approached. 
This result played a crucial role in the original work 
of \citeasnoun{Bopp1940} and \citeasnoun{Podolsky1942}.
It has the consequence that, at least for a charge
at rest, the  total field energy\footnote{The
    total field energy 
${\cal{E}}=\int_{\mathbf{R}^3} T_{00} d^3x$ where $T_{ab}$ is defined in the
appendix.} is finite. Note, however, that the electric 
\cRed{{\em vector field}} cannot be continuously extended into the
origin, because of the above-mentioned directional
singularity, see Fig.~\ref{fig:static2}. 


\begin{figure}[h]
    \psfrag{x}{{$x$}} 
    \psfrag{y}{{$E_x (x)$}} 
    \psfrag{a}{{$\,$ \hspace{-0.75cm} $\frac{q}{2 \ell ^2}$}} 
    \psfrag{b}{{$\,$ \hspace{0.3cm} $-\frac{q}{2 \ell ^2}$}} 
\centerline{\includegraphics[width=13.5cm]{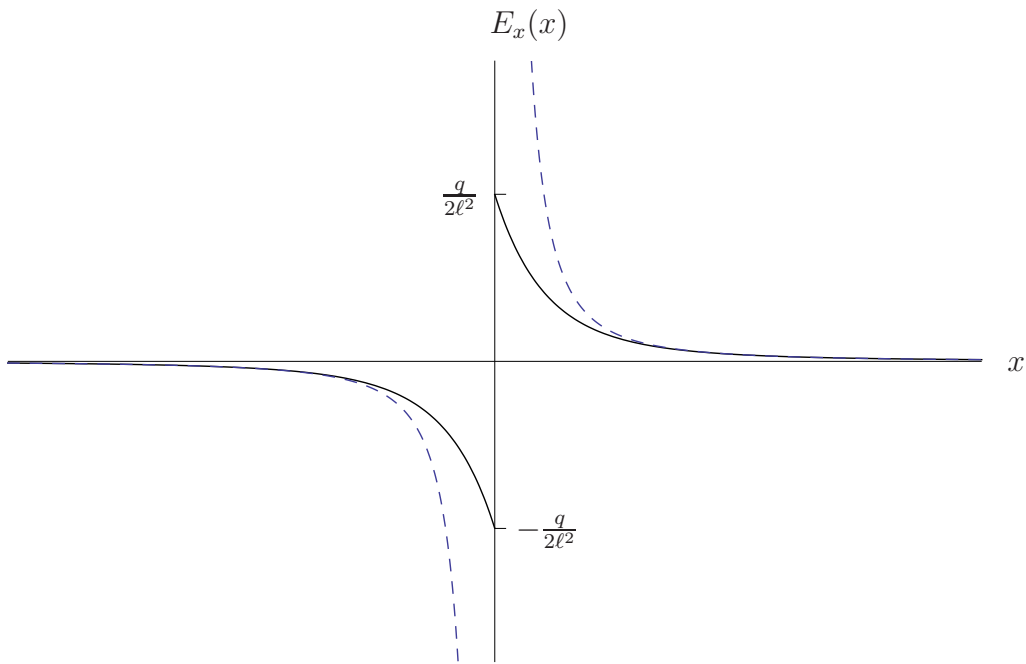}}
\caption{The $x$-component of the electric field strength on the $x$-axis
of a charge at rest in the Bopp-Podolsky theory (solid) and in the 
standard Maxwell theory (dashed)}
\label{fig:static2}
\end{figure}

\section{Field of a point charge on the worldline and 
self-force}\label{sec:pointon}

The two expressions (\ref{eq:Apoint1}) and (\ref{eq:Apoint2}) for the
potential are equivalent \cRed{off} the worldline, and so are
the two expressions (\ref{eq:Fpoint1}) and (\ref{eq:Fpoint2}) for the
field strength.  We now discuss \cRed{their behavior on and near
the worldline}. This is crucial because the value of the field strength
on the worldline \cRed{will determine} the self-force. 
\cRed{Both representations of the potential (\ref{eq:Apoint1}) and
  (\ref{eq:Apoint2}) are well defined and continuous on the
  worldline.} If \cRed{the potential}
were differentiable, its derivative would give the field strength
on the worldline without any ambiguity. However, \cRed{it} is
\emph{not} differentiable on the worldline. This is the reason why the
expression (\ref{eq:Fpoint1}), which results from differentiating
(\ref{eq:Apoint1}), and the expression (\ref{eq:Fpoint2}), which
results from differentiating (\ref{eq:Apoint2}), behave differently on
the worldline.  

\cRed{First} observe that (\ref{eq:Fpoint1}) is not defined at
events on the worldline because it involves the derivative $\partial
_b \tau _R ( \boldsymbol{x} )$ which is not defined at such events. By
contrast, (\ref{eq:Fpoint2}) does not involve this derivative and
\cRed{gives a {\em unique}} value for the field strength on the worldline,
provided that the integrals converge. However, this value does not
result from differentiating the potential on the worldline because in
order to differentiate (\ref{eq:Apoint2}) \cRed{one must pass} the
derivative \cRed{under} the integral; this is only valid if the integrand
has a continuous derivative, which does not occur if $\boldsymbol{x}$
is on the worldline.

\cRed{ As noted, the field $F_{ab}(\boldsymbol{x})$ given by (\ref{eq:Fpoint1})
  or (\ref{eq:Fpoint2}) is not continuous in a neighborhood of events
  on the worldline: if a point $\boldsymbol{x}$ on the worldline is
  approached some of the components have non-zero {\em finite}
  limits that depend on
  the direction taken. This is manifest in (\ref{eq:Fpoint1}) where
  the discontinuity arises solely from the first term. Such behavior
  contrasts with similar problems in defining the self-force on a
  charged particle in Maxwell-Lorentz electrodynamics where a similar
  directional dependence arises as one approaches the particle
  worldline. However, in that theory the same limiting values are
  infinite.}

\cRed{Since the self-force (to be defined below) depends on the values
  of $F_{ab}(\boldsymbol{x})$ on the worldline, one must assign them
  {\em specific} values. Such assignments should be based on physical
  criteria beyond the mathematical analysis thus far. For example an
  isolated free point charge should remain in ``inertial'' motion in
  the absence of external fields, i.e. the self-force should be
  zero. From example 1, this can be achieved by assigning the value
  zero to all $F_{ab}(\boldsymbol{x})$ on the inertial worldline in
  this case. This assignment is equivalent to either adopting
  (\ref{eq:Fpoint2}) directly for $F_{ab}(\boldsymbol{x})$ or applying
  a ``directional-averaging'' (see below) to
  (\ref{eq:Fpoint1}). Although for arbitrary motion one may always
  define a directional-averaging such that the value given by
  (\ref{eq:Fpoint2}) arises from such a process applied to
  (\ref{eq:Fpoint1}), this is by no means a unique procedure.  In
  general one may consider more complex averaging procedures involving
  constructions based on extensions of the natural Frenet frame
  defined by the worldline.}

\begin{figure}
\psfrag{xit}{$\boldsymbol{\xi}(\tau)$}
\psfrag{xitdot}{$\dot{\boldsymbol{\xi}}(\tau_0)$}
\psfrag{xit0}{$\boldsymbol{\xi}(\tau_0)$}
\psfrag{xit1}{$\boldsymbol{\xi}(\tau_1)$}
\psfrag{xit2}{$\boldsymbol{\xi}(\tau_2)$}
\centerline{\includegraphics[height=12.cm]{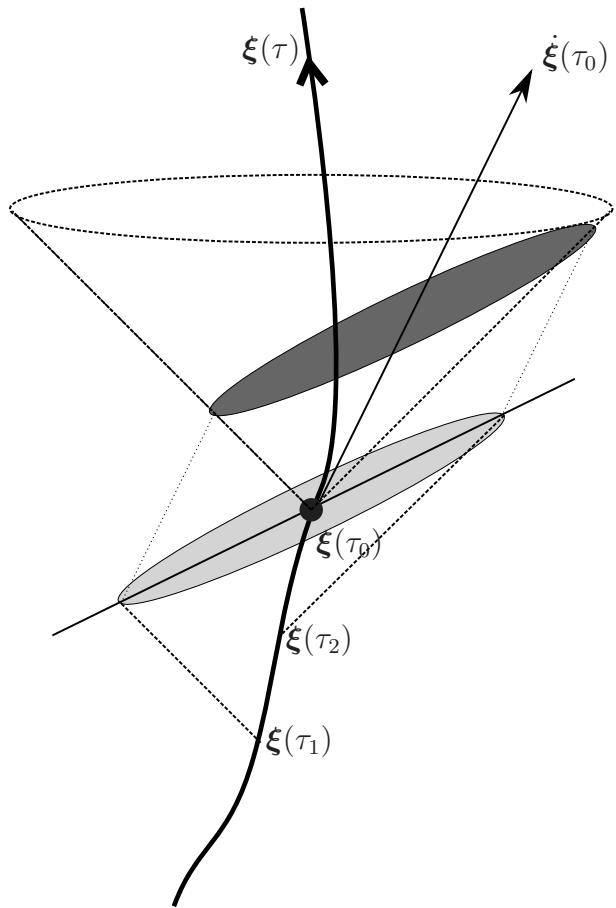}}
\caption{\cRed{This figure illustrates a retarded 2-sphere (dark grey)
    and a rest 2-sphere (light grey), defined by an arbitrary timelike
    worldline in Minkowski spacetime. The rest sphere with origin
    $\boldsymbol{\xi}(\tau_0)$ lies in the orthocomplement of the
    tangent vector $\dot{\boldsymbol{\xi}}(\tau_0)$. The retarded
    sphere is the intersection of the rest sphere's parallel
    translation along $\dot{\boldsymbol{\xi}}(\tau_0)$ with the
    forward lightcone of ${\boldsymbol{\xi}}(\tau_0)$. The averaging
    procedure described in the text corresponds to sampling the fields
    of the particle as the radii of these spheres tend to zero. Fields
    on the rest sphere are generated by events on the past history of
    $\boldsymbol{\xi}(\tau_2)$. By contrast fields on the retarded
    sphere are all generated by the past history of
    $\boldsymbol{\xi}(\tau_0)$.}}
\label{fig:Spheres}
\end{figure}

{
A procedure based solely on the
worldline's tangent vector, rather {than} the full geometry of its Frenet
frame can 
be done most easily if we introduce the (spatial)
{\em retarded 2-sphere}, see Fig. \ref{fig:Spheres},
\begin{equation}\label{eq:retsph}
S(\tau _0 , r_0) = \big\{ \boldsymbol{x} \in \mathbb{R}{}^4
\big| \, \tau _R ( \boldsymbol{x} ) = \tau _0 \, , \:
r_R ( \boldsymbol{x} ) = r_0 \big\} \, .
\end{equation}
The directional-average of any spacetime tensor's inertial components
$K^{a\ldots}{}_{b\ldots}(\boldsymbol{x})$ over the 2-sphere
$S(\tau _0, r_0 )$ is defined as
\begin{equation}
\overline{K^{a\ldots}{}_{b\ldots}}=\frac{1}{4\pi}\int\!\!\int_{S(\tau _0, r_0 )} 
K^{a\ldots}{}_{b\ldots}(\boldsymbol{x}) \, d S
\label{eq:average} 
\end{equation}
where $d S$ is the natural surface measure on $S(\tau _0, r_0 )$,
induced by the ambient Minkowski metric.
This integral depends only on $\tau_0$, $r_0$,
$\boldsymbol{\xi}(\tau_0)$ and $\dot{\boldsymbol{\xi}}(\tau_0)$ and 
is manifestly Lorentz {\em covariant} with respect to global
Lorentz transformations. I.e. if
\begin{equation}
K^{a'\ldots}{}_{b'\ldots}(\boldsymbol{x}) = 
\Lambda^{a'}{}_{a} \cdots\,\Lambda_{b'}{}^{b} \cdots\,
K^{a\ldots}{}_{b\ldots}
(\boldsymbol{x})
\end{equation}
then
\begin{equation}
\overline{K^{a'\ldots}{}_{b'\ldots}}
=
\Lambda^{a'}{}_{a} \cdots\Lambda_{b'}{}^{b} \cdots 
\overline{K^{a\ldots}{}_{b\ldots}}
\end{equation}
where $\Lambda^{a'}{}_{a}\in SO(3,1)$.

Coordinating $S(\tau _0, r_0 )$ with spherical polars
$(\vartheta,\varphi)$, the measure $dS=\sin\vartheta\,d\vartheta\,d\varphi$ 
and 
(\ref{eq:average}) 
becomes
\begin{equation}
\overline{K^{a\ldots}{}_{b\ldots}}=\frac{1}{4\pi}
\int _0 ^{2\pi}
\int _{0} ^{\pi} 
K^{a\ldots}{}_{b\ldots} (\boldsymbol{x})\,
\mathrm{sin} \, \vartheta \, d \vartheta \, d \varphi 
\label{eq:average2} 
\end{equation}

In terms  of an orthonormal 
tetrad $\big( \boldsymbol{e}{}_0 , \boldsymbol{e}{}_1 ,
\boldsymbol{e}{}_2 , \boldsymbol{e}{}_3 \big)$ with
$\boldsymbol{e}{}_0 = \dot{\boldsymbol{\xi}} ( \tau _0 )$
\begin{equation}\label{eq:ne}
n^a ( \boldsymbol{x} ) = 
\mathrm{cos} \, \varphi  \, \mathrm{sin} \, \vartheta \, e_1^a   
+
\mathrm{sin} \, \varphi  \, \mathrm{sin} \, \vartheta \, e_2^a   
+
\mathrm{cos} \, \vartheta \, e_3^a
\end{equation}
for $\boldsymbol{x}\in S(\tau _0, r_0 )$. Hence
$\overline{n^a}=0$. Since $\dot{\xi}{}^a \big( \tau _R (
\boldsymbol{x} ) \big) = \dot{\xi}{}^a ( \tau _0 \big)$ is constant on
$S( \tau _0 , r_0 )$ the first term on the right-hand side of
(\ref{eq:Fpoint1}) averages to zero. Hence its limit for $r_0 \to 0$
is zero as well, so this term gives no contribution to the average of
the field (\ref{eq:Fpoint1}) at $\boldsymbol{\xi} ( \tau _0 )$. Since
the second term in (\ref{eq:Fpoint1}) is continuous in a neighborhood
of the $\boldsymbol{\xi}(\tau_0)$ its average is well defined. Hence
the averaged value of (\ref{eq:Fpoint1}) equals the second term on the
right hand side of (\ref{eq:Fpoint1}) and is also equal to
(\ref{eq:Fpoint2}) after an integration by parts. 

An alternative averaging procedure can be defined in terms of 
{\em rest} spheres in the orthocomplement of $\dot{\xi} (\tau _0 )$, see Fig. \ref{fig:Spheres}. In this
case $\dot{\xi}{}^a \big( \tau_R ( \boldsymbol{x} ) \big)$ is
\emph{not} constant on a sphere of finite radius $r_0$, so the 
calculation is less convenient; however, in the limit of $r_0$
tending to zero one finds, again, that the first term on
the right-hand side of (\ref{eq:Fpoint1}) averages to zero.
Thus these two averaging procedures give the same result and 
correspond to surrounding the 
worldline either by a \emph{Bhabha tube} or by a \emph{Dirac
tube}. See \citeasnoun{Norton2009} and \citeasnoun{FerrisGratus2011} for
a similar discussion in the context of the standard Maxwell-Lorentz
  theory with point charges.

Motivated by the application of this procedure to inertial motion in
example 1 and the equivalence of the directional-average of
(\ref{eq:Fpoint1}) with (\ref{eq:Fpoint2}) for general motion, we 
assign to $F_{ab}(\boldsymbol{x})$ on the
worldline the unique value (\ref{eq:Fpoint2}) for {\em all
  motions}. This is intuitively persuasive if one thinks of the point
charge as the limiting case of an extended charge whose size tends to
zero. Directional-averaging is formulated as an axiom in the living
review on the self-force by Poisson, Pound and Vega, see Section 24.1
in \cite{PoissonPoundVega2011}.}

\cRed{Given a definition of a} {\em finite} field $F_{ab} \big( \boldsymbol{x} \big)$
at an event $\boldsymbol{x} = \boldsymbol{\xi} ( \tau _0 )$
on the worldline, \cRed{the self-force $f^{\textup{s}}_a ( \tau _0 )$ is defined
as}
the \cRed{relativistic} Lorentz force\footnote{In the appendix, this force is shown to
  arise from the divergence of the electromagnetic
  stress-energy-momentum tensor associated with the Bopp-Podolsky theory.}
exerted by \cRed{$F_{ab} \big( \boldsymbol{\xi} ( \tau _0 ) \big)$ on}
the point charge that produces the field: 
\begin{equation}\label{eq:sf1}
f^{\textup{s}}_a ( \tau _0 ) \,  = \, 
q \, F_{ab}  \big( \boldsymbol{\xi} ( \tau _0 ) \big) 
\, \dot{\xi}{}^b ( \tau _0 ) \, .
\end{equation}
The \cRed{equation} of motion for some $C^0$ functions $\xi^a(\tau)$
is \cRed{assumed to take the form}
\begin{equation}\label{eq:eom}
m \, \ddot{\xi}{} _a ( \tau ) \,  = \, 
f_a^{\textup{s}} ( \tau ) \, + \, f_a^{\textup{e}}  ( \tau )  
\end{equation}
where $m$ denotes the {\em finite} inertial mass of the particle 
 and $f_a^{\textup{e}}  ( \tau )$ is an external \cRed{Minkowski} force. If the 
latter is electromagnetic in origin,
$f_a^{\textup{e}} ( \tau )=q \, F^{\textup{e}}_{ab} \big( \boldsymbol{\xi} ( \tau ) \big)
\, \dot{\xi}{}^b ( \tau )$, were $F^{\textup{e}}_{ab}$ solves the
Bopp-Podolsky field equations with all sources \cRed{other}
than the point particle with charge $q$\cRed{, which includes the
  special case of no other sources.}
If $f_a^{\textup{e}} ( \tau )$ is given,
(\ref{eq:eom}) is an integro-differential equation for the worldline
$\boldsymbol{\xi} ( \tau )$.

In the standard \cRed{Maxwell-Lorentz theory with point charges}, the self-force is infinite; 
therefore, it is necessary to perform a mass renormalization, 
introducing a ``bare mass'' of the particle that is negative 
infinite.  By contrast, in the  
Bopp-Podolsky theory there is no need or justification
for introducing an infinite bare mass.

\cRed{From (\ref{eq:Fpoint2}) the self-force reads
\begin{equation}\label{eq:sf2a}
f^{\textup{s}}_a ( \tau _0 ) \,  = \, 
\frac{q^2 \dot{\xi}{}^b ( \tau _0 )}{\ell } 
\int _{0} ^{\infty}
\frac{\partial}{\partial \zeta}
W_{ab} \big( \zeta, \boldsymbol{\xi} ( \tau _0 ) \big) 
\frac{
J_1 (\zeta / \ell ) \, d \zeta
}{
\zeta
}
\, .
\end{equation}
where
\begin{equation}\label{eq:Wab}
\fl
W_{ab} \big( \zeta, \boldsymbol{\xi} ( \tau _0 ) \big) 
\, = \, 
\left.
\frac{
\big( \xi _b ( \tau _0 ) -\xi _b ( \tau  ) \big) \dot{\xi}{}_a ( \tau)
-
\big( \xi _a ( \tau _0 )-\xi _a ( \tau  ) \big) \dot{\xi}{}_b ( \tau) 
}{
\dot{\xi}{}^c ( \tau) 
\big( \xi _c ( \tau _0 ) -\xi _c ( \tau  ) \big)
} 
\right|
_{\tau = \varpi ( \zeta , \boldsymbol{\xi} ( \tau _0 ) )}
\, .
\end{equation}
}

\cRed{Although (\ref{eq:eom}) contains explicit derivatives of maximal
  order two, the presence of the integral (\ref{eq:sf2a}) over the
  past history of the worldline implies that it cannot be solved given
  only $f^{\textup{e}}_a(\tau)$ and the initial position and velocity
  of the particle at any initial $\tau$. Integro-differential
  equations involving retarded (or memory) effects are not uncommon in
  continuum mechanics and the Maxwell electrodynamics of continuous
  media. Indeed even in vacuo the modifications of the
  Abraham-Lorentz-Dirac equation due to the presence of a background
  gravitational field yields a similar integral over the past history
  of a charged point particle worldline \cite{DeWittBrehm1960}. In
  such problems additional physical criteria motivate analytic or
  numerical procedures that can be used to construct solutions. Since
  in principle the past history of any point charge is not empirically
  accessable, it seems inevitable that similar methodologies will be
  required to determine any future motion uniquely from (\ref{eq:eom})
  given only consistent field and particle data on some arbitrary
  3-dimensional spacelike hypersurface in Minkowski spacetime. These
  general ideas will be developed elsewhere.}

\subsection*{Example 2: Uniformly accelerated motion}

For a particle in \cRed{hyperbolic motion} with constant acceleration
$a$ \cRed{in its instantaneous rest frame}, the worldline is given by
\begin{eqnarray}\label{eq:hyp1}
\xi ^0 ( \tau ) 
\, = \, 
\frac{1}{a} \, \mathrm{sinh} (a \tau ) 
\, , \qquad
\xi ^1 ( \tau ) 
\, = \, 
\frac{1}{a} \, \mathrm{cosh} (a \tau ) \,
\\
\nonumber
\xi ^2 ( \tau ) 
\, = \, 
\xi ^3 ( \tau ) 
\, = \, 0 \, .
\end{eqnarray}
In this case, for a point $\boldsymbol{x} = \boldsymbol{\xi} 
( \tau _0 )$ on the worldline (\ref{eq:zeta}) reads
\begin{equation}\label{eq:hypzeta}
\zeta ^2 \, = \, \frac{2}{a^2} \, 
\Big( \mathrm{cosh} \big( a ( \tau _0 - \tau ) \big) 
\, - \, 1 \, \Big) \, .
\end{equation}
The self-force (\ref{eq:sf2a}) reduces to 
\begin{equation}\label{eq:sfh1}
f_a^{\textup{s}} ( \tau _0 ) \,  = \,  - \, q^2 \, \ddot{\xi}{}_a ( \tau _0 ) \, 
\int _0 ^{\infty}  
\frac{ J_2 \big( \frac{\zeta}{\ell} \big) \, d \zeta }{ 
2 \, \ell ^2  \sqrt{1+\frac{a^2 \zeta ^2}{4}}}
\end{equation}
which can be expressed in terms of the Bessel functions 
$I_1$ and $K_1$,
\begin{equation}\label{eq:sfh2}
f_a^{\textup{s}} ( \tau _0 ) 
\,  = \, 
 - \, \frac{q^2}{a \ell ^2} \, I_1 \big( (a \ell)^{-1} \big) \,
K_1 \big( (a \ell)^{-1} \big) \, \ddot{\xi}{}_a ( \tau _0 ) \, . 
\end{equation}
This result was recently found by \citeasnoun{Zayats2013}.
So the self-force \cRed{here is manifestly} finite.

\section{Finiteness of the field of a point charge and of the 
self-force}\label{sec:finite}
In the standard Maxwell-Lorentz theory, the (Li{\'e}nard-Wiechert) potential 
of a point charge and the corresponding field strength are singular on
the worldline of the source. By contrast, in
the Bopp-Podolsky theory there is a large class of worldlines for 
which the self-force is given by an absolutely converging
integral.

As $\dot{\boldsymbol{\xi}} ( \tau )$ \cRed{is a timelike unit vector}, and $\boldsymbol{x}- 
\boldsymbol{\xi}  ( \tau)$ has Lorentz length $\zeta$, we may write
\begin{equation}\label{eq:chi}
\dot{\xi}{}^a \big( \varpi ( \zeta , \boldsymbol{x} ) \big) 
\, = \,  
\mathrm{cosh} \, \chi ( \zeta , \boldsymbol{x} )  \, \delta _0^a
\, + \, 
\mathrm{sinh} \chi ( \zeta , \boldsymbol{x} )  
\, \nu ^{\rho} ( \zeta , \boldsymbol{x} ) \, \delta _{\rho} ^a
\end{equation}
and
\begin{equation}\label{eq:psi}
x^a - \xi ^a \big( \varpi ( \zeta , \boldsymbol{x} ) \big) 
\, = \,  
\zeta \, 
\Big( \mathrm{cosh} \, \psi ( \zeta , \boldsymbol{x} ) \, \delta _0^a
\, + \, 
\mathrm{sinh} \, \psi ( \zeta , \boldsymbol{x} ) 
\, \mu ^{\rho} (\zeta , \boldsymbol{x} ) \, \delta _{\rho} ^a \Big)
\end{equation}
where $\boldsymbol{\nu} ( \zeta , \boldsymbol{x} )$ and 
$\boldsymbol{\mu} ( \zeta , \boldsymbol{x} )$ are spatial unit vectors, 
\begin{equation}\label{eq:numu}
\nu _{\rho} (\zeta , \boldsymbol{x} )
\, \nu ^{\rho} (\zeta , \boldsymbol{x} ) 
\, = \, 
\mu _{\rho} (\zeta , \boldsymbol{x} )
\, \mu ^{\rho} (\zeta , \boldsymbol{x} )
\, = \, 1 \, . 
\end{equation}
We introduce the following terminology.
\begin{definition}\label{def:baftlc}
The worldline $\boldsymbol{\xi}$ is \emph{bounded away 
from the past light-cone} of an event $\boldsymbol{x}$ in the
future of $\boldsymbol{\xi}$ if 
$\psi \big( \zeta , \boldsymbol{x} )$ stays bounded
for $\zeta \to \infty$. 
\end{definition}
Geometrically, $\boldsymbol{\xi}$ is not bounded away 
from the past light-cone of $\boldsymbol{x}$ if and only if there 
is a sequence $\tau _k$ such that $\big(\boldsymbol{x} - \boldsymbol{\xi} 
(\tau _k) \big) / \big( x^0 - \xi^0 (\tau _k ) \big)$ approaches a 
lightlike vector for $\tau _k \to \tau _{\mathrm{min}}$, see 
Fig.~\ref{fig:baftlc}. 

\begin{figure}
    \psfrag{x}{{$\boldsymbol{x}$}} 
    \psfrag{p}{{$\boldsymbol{\xi} ( \tau )$}} 
\centerline{\includegraphics[width=15cm]{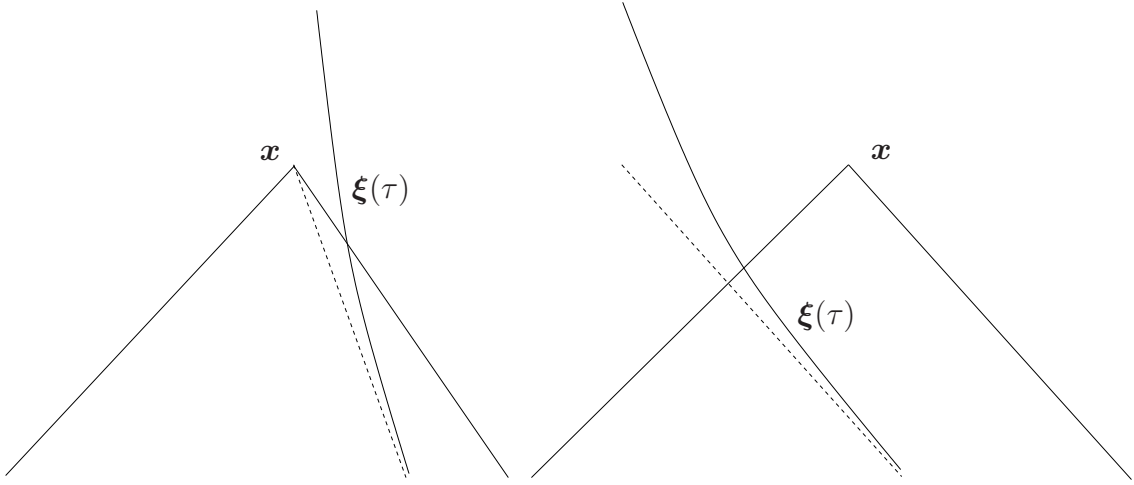}}
\caption{Worldline bounded away from the past light-cone of 
$\boldsymbol{x}$ (left) and not bounded away from the past 
light-cone of $\boldsymbol{x}$ (right)}
\label{fig:baftlc}
\end{figure}

The notion of being bounded away from the light-cone implicitly
refers to a particular inertial coordinate system chosen, and it
refers to a particular event $\boldsymbol{x}$. However, the notion
is actually independent of these choices, as the following proposition
shows.
\begin{proposition}\label{prop:baftlc}
The property of the worldline being bounded away from the past 
light-cone of $\boldsymbol{x}$ is preserved if we change the inertial
coordinate system by an orthochronous Lorentz transformation.
If this property is true for one event $\boldsymbol{x}$ in the 
future of the worldline, then the
future of the worldline is all of $\mathbb{R}^4$ and the 
property is true for all other events $\boldsymbol{y} \in 
\mathbb{R}^4$ as well.  
\end{proposition}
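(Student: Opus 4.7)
The plan is to establish the two parts of the statement in turn, using throughout the characterization $\cosh\psi(\zeta,\boldsymbol{x}) = (x^0 - \xi^0(\tau))/\zeta$ and $\sinh\psi(\zeta,\boldsymbol{x}) = |\vec{x} - \vec{\xi}(\tau)|/\zeta$ that follows from (\ref{eq:psi}), evaluated at $\tau = \varpi(\zeta,\boldsymbol{x})$, together with the fact that the Lorentz distance $\zeta$ is itself invariant.

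For the Lorentz-invariance claim I would decompose an orthochronous Lorentz transformation $\Lambda$ into a spatial rotation followed by a pure boost. Rotations leave the time-component of $\boldsymbol{v} := \boldsymbol{x} - \boldsymbol{\xi}(\tau)$ unchanged and hence preserve $\cosh\psi$. A boost of rapidity $\alpha$ in direction $\hat{\boldsymbol{n}}$ yields $v^{0'} = \cosh\alpha\, v^0 - \sinh\alpha\, \hat{\boldsymbol{n}}\cdot\vec{v}$; orthochronicity together with the future-timelikeness of $\boldsymbol{v}$ forces $v^{0'} > 0$. Substituting $v^0 = \zeta\cosh\psi$ and $|\vec{v}| = \zeta\sinh\psi$ and using $\zeta' = \zeta$ produces
\begin{equation}
\cosh\psi' \, \le \, \cosh\alpha\,\cosh\psi + |\sinh\alpha|\,\sinh\psi \, = \, \cosh\bigl(|\alpha|+\psi\bigr),
\end{equation}
so any uniform bound on $\psi$ is transported to a uniform bound on $\psi'$.

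For the second assertion I would first prove by contraposition that boundedness of $\psi(\zeta,\boldsymbol{x})$ precludes the worldline from approaching a null asymptote as $\tau \to \tau_{\min}$. Writing such an asymptotic worldline as $\boldsymbol{\xi}(\tau) = \boldsymbol{\xi}_\ast + s(\tau)\boldsymbol{l} + \boldsymbol{\eta}(\tau)$ with $\boldsymbol{l}$ future-directed null, $s(\tau)\to-\infty$ and $\boldsymbol{\eta}(\tau)$ subdominant, the null identity $\boldsymbol{l}\cdot\boldsymbol{l}=0$ gives $\zeta^2 \sim -2s(\tau)\,\boldsymbol{l}\cdot(\boldsymbol{x}-\boldsymbol{\xi}_\ast)$ while $x^0 - \xi^0(\tau) \sim -s(\tau)\,l^0$, so $\cosh\psi \sim \sqrt{|s(\tau)|} \to \infty$, contradicting the hypothesis. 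By the geometric fact recalled around Fig.~\ref{fig:future}, it follows that the future of the worldline is all of $\mathbb{R}^4$.

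To finish I would fix any $\boldsymbol{y} = \boldsymbol{x} + \boldsymbol{w} \in \mathbb{R}^4$, set $\zeta_y(\tau) := D(\boldsymbol{y} - \boldsymbol{\xi}(\tau))$, and use the hypothesis that both $\cosh\psi(\zeta,\boldsymbol{x})$ and $\sinh\psi(\zeta,\boldsymbol{x})$ are $O(1)$ to conclude that every component of $\boldsymbol{x} - \boldsymbol{\xi}(\tau)$ is $O(\zeta)$. Expanding
\begin{equation}
\zeta_y^2 \, = \, \zeta^2 \, - \, 2\bigl(\boldsymbol{x}-\boldsymbol{\xi}(\tau)\bigr)\!\cdot\!\boldsymbol{w} \, - \, \boldsymbol{w}\!\cdot\!\boldsymbol{w} \, = \, \zeta^2\bigl(1+O(\zeta^{-1})\bigr)
\end{equation}
gives $\zeta_y/\zeta \to 1$, whereupon $\cosh\psi(\zeta_y,\boldsymbol{y}) = \bigl(\zeta\cosh\psi(\zeta,\boldsymbol{x}) + w^0\bigr)/\zeta_y$ is manifestly bounded as $\zeta \to \infty$. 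I expect the main obstacle to be making the null-asymptote step fully rigorous: formalizing what ``asymptotes to a lightlike line'' means for a $C^{\infty}$ timelike worldline parametrized by proper time, and ruling out degenerate situations in which the transverse correction $\boldsymbol{\eta}(\tau)$ competes with $s(\tau)\boldsymbol{l}$. This can be resolved by exploiting $\dot{\xi}{}^a\dot{\xi}{}_a = -1$, which forces any unbounded growth in the components of $\dot{\boldsymbol{\xi}}(\tau)$ to coincide with the rapidity of $\dot{\boldsymbol{\xi}}$ relative to the inertial frame becoming unbounded, which is precisely the signature of an approach to a null asymptote.
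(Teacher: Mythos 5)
Your proof is essentially correct, and its first half is a rapidity-flavoured restatement of the paper's own argument: the paper does not decompose $\Lambda$ into rotation and boost but simply bounds $\tilde{x}^0-\tilde{\xi}^0(\tau)<K\,(x^0-\xi^0(\tau))$ using $|x^\mu-\xi^\mu(\tau)|<x^0-\xi^0(\tau)$ and orthochronicity; your estimate $\cosh\psi'\le\cosh(|\alpha|+\psi)$ is an equivalent (slightly sharper) version of the same bound, since the invariance of the denominator $\zeta$ is common to both. Where you genuinely diverge is the second half. The paper never invokes null asymptotes: it rewrites boundedness as $|\vec{x}-\vec{\xi}(\tau)|/(x^0-\xi^0(\tau))<\lambda<1$ and, for an arbitrary event $\boldsymbol{y}$, runs a purely Euclidean triangle-inequality computation to obtain $|\vec{y}-\vec{\xi}(\tau)|<\mu\,(y^0-\xi^0(\tau))$ with $\lambda<\mu<1$ for $\tau$ sufficiently close to $\tau_{\mathrm{min}}$, which in one stroke shows both that $\boldsymbol{y}$ is in the chronological future of the worldline and that the worldline is bounded away from its light-cone. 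Your final paragraph reaches the same conclusion by expanding the Lorentzian distance, $\zeta_y^2=\zeta^2\bigl(1+O(\zeta^{-1})\bigr)$; since this is positive and $y^0-\xi^0(\tau)=\zeta\cosh\psi+w^0>0$ for $\zeta$ large, the event $\boldsymbol{y}$ is indeed in the future of the worldline and $\cosh\psi(\zeta_y,\boldsymbol{y})$ is bounded. That computation is correct and self-contained, and it buys you a quantitative statement ($\zeta_y/\zeta\to1$) that the paper's inequality chase does not make explicit.

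The one thing you should change: your middle paragraph, the contrapositive null-asymptote argument, is both the only non-rigorous step (as you yourself note, the ansatz $\boldsymbol{\xi}_\ast+s(\tau)\boldsymbol{l}+\boldsymbol{\eta}(\tau)$ with ``subdominant'' $\boldsymbol{\eta}$ is not a general description of a worldline whose future is a proper subset of $\mathbb{R}^4$, and the geometric dichotomy near Fig.~\ref{fig:future} is itself only asserted, not proved, in the paper) and entirely redundant. The claim ``the future of the worldline is all of $\mathbb{R}^4$'' means precisely that every $\boldsymbol{y}$ lies in the future of the worldline, and your last paragraph already delivers exactly that. Delete the null-asymptote step and the proof stands on its own.
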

\begin{proof}
{}From (\ref{eq:psi}) we read that
\begin{equation}\label{eq:ba1}
\frac{x^0 - \xi ^0 \big( \varpi ( \zeta , \boldsymbol{x} ) \big) 
}{\sqrt{ - \Big(
x^a - \xi ^a \big( \varpi ( \zeta , \boldsymbol{x} ) \big) 
\Big) \Big(
x_a - \xi _a \big( \varpi ( \zeta , \boldsymbol{x} ) \big) 
\Big)}}\, = \, \mathrm{cosh} \,  
\psi ( \zeta , \boldsymbol{x} )
\, .
\end{equation}
The worldline is bounded away from the light-cone of $\boldsymbol{x}$
if and only if the right-hand side is bounded for $\zeta 
\to \infty$, i.e., if and only if there exists $\delta > 0$
such that 
\begin{equation}\label{eq:ba2}
\frac{x^0 - \xi ^0 ( \tau ) }{
\sqrt{- \Big(
x^a - \xi ^a ( \tau ) 
\Big) \Big(
x_a - \xi _a ( \tau ) 
\Big)}} \, < \, \delta 
\end{equation}
for $\tau _{\mathrm{min}} < \tau < \tau _0$ with some 
$\tau _0$. To prove the 
first part of the proposition, we assume that this 
condition holds in the chosen inertial system. Under a
Lorentz transformation, $\tilde{x}{}^a = \Lambda ^a{}_b
x^b$, the denominator on the left-hand side of (\ref{eq:ba2})
is unchanged,
\begin{equation}\label{eq:ba3}
\Big(
\tilde{x}{}^a - \tilde{\xi}{}^a ( \tau ) 
\Big) \Big(
\tilde{x}{}_a - \tilde{\xi}{}_a ( \tau ) 
\Big) \, = \, 
\Big(
x^a - \xi ^a ( \tau ) 
\Big) \Big(
x_a - \xi _a ( \tau ) 
\Big) \, ,
\end{equation}
while the numerator changes according to
\begin{equation}\label{eq:ba4}
\tilde{x}{}^0 - \tilde{\xi}{}^0 ( \tau ) \, = \,  
\Lambda ^0{}_0 \big( x^0 - \xi ^0 ( \tau ) \big) +
\Lambda ^0{}_{\mu} \big( x^{\mu} - \xi ^{\mu} ( \tau ) \big) \, . 
\end{equation}
As $\boldsymbol{x} - \boldsymbol{\xi} ( \tau )$ is
timelike and future-pointing, 
\begin{equation}\label{eq:ba5}
\big| x^{\mu} - \xi ^{\mu} ( \tau ) \big|
<
 x^0 - \xi ^0 ( \tau ) 
\end{equation}
for $\mu = 1,2,3$.
As a consequence, (\ref{eq:ba4}) implies that
\begin{equation}\label{eq:ba6}
\tilde{x}{}^0 - \tilde{\xi}{}^0 ( \tau ) 
< 
K  \Big( x^0 - \xi ^0 ( \tau ) \Big) \end{equation}
with some positive constant K. Here we have 
assumed that the Lorentz transformation is orthochronous,
$\Lambda ^0{}_0 >0$. From (\ref{eq:ba2}) we 
find that
\begin{equation}\label{eq:baw6}
\frac{
\tilde{x}{}^0 - \tilde{\xi}{}^0 ( \tau ) 
}{\sqrt{ - \Big(
\tilde{x}{}^a - \tilde{\xi}{}^a ( \tau ) 
\Big) \Big(
\tilde{x}{}_a - \tilde{\xi}{}_a ( \tau ) 
\Big)}} \, < \, K \, \delta 
\end{equation}
for $\tau _{\mathrm{min}} < \tau < \tau _0$ which proves 
that the condition of the worldline being bounded away 
from the light-cone of the chosen event holds in the 
twiddled coordinate system as well. To prove the second 
part of the proposition, we observe that (\ref{eq:psi}) 
implies
\begin{equation}\label{eq:ba7}
\frac{\big|
\vec{x}{} - \vec{\xi} \big( \varpi ( \zeta , \boldsymbol{x} ) \big) 
\big| }{
x^0 - \xi ^0 \big( \varpi ( \zeta , \boldsymbol{x} ) \big) 
} \, = \, \big| \mathrm{tanh} \,  
\psi ( \zeta , \boldsymbol{x} ) \big|
\, .
\end{equation}
Here and in the following, we write $\big| \vec{a} \big| = 
\sqrt{\delta _{\mu \nu} a^{\mu} a^{\nu}}$ for any $\vec{a} =
(a^1,a^2,a^3)$. The worldline is bounded away from
the light-cone of $\boldsymbol{x}$ if and only if the 
right-hand side of (\ref{eq:ba7}) is bounded away from 1
for $\zeta \to \infty$, i.e., if and only if there is a 
$\lambda$ with $0 < \lambda < 1$ such that
\begin{equation}\label{eq:baw7}
\frac{\big|
\vec{x}{} - \vec{\xi} ( \tau ) 
\big| }{
x^0 - \xi ^0 ( \tau )
} \, < \, \lambda
\end{equation}
for $\tau _{\mathrm{min}} < \tau < \tau _0$ with some 
$\tau _0$. Let us assume that this condition holds for 
some particular event $\boldsymbol{x}$. Let $\boldsymbol{y}$ 
be any other event and choose a constant $\mu$ such that 
$\lambda < \mu < 1$. Define
\begin{equation}\label{eq:ba8}
t:= \frac{\mu y^0 - \lambda x^0 - \big| \vec{y} - \vec{x} \big| 
}{\mu - \lambda} \, .
\end{equation}
Then we have, for all $\tau$ such that $\xi ^0 ( \tau) <t$,
\begin{eqnarray}\label{eq:ba9}
\big| \vec{y} - \vec{\xi} ( \tau ) \big| -
\mu \big( y^0 - \xi ^0 ( \tau ) \big) 
&=&
\big| \vec{y} - \vec{x} + \vec{x} - \vec{\xi} ( \tau ) \big| -
\mu \big( y^0 - \xi ^0 ( \tau ) \big) 
\nonumber
\\
&\le&
\big| \vec{y} - \vec{x} \big| + \big| \vec{x} - \vec{\xi} ( \tau ) \big| -
\mu \big( y^0 - \xi ^0 ( \tau ) \big) 
\nonumber
\\
&<&
\big| \vec{y} - \vec{x} \big| + \lambda \big( x^0 - \xi ^0 ( \tau ) \big) -
\mu \big( y^0 - \xi ^0 ( \tau ) \big) 
\nonumber
\\
&<&
\big| \vec{y} - \vec{x} \big| + \lambda  x^0 - 
\mu y^0 + ( \mu - \lambda ) \, t \, = \, 0 
\end{eqnarray}
hence
\begin{equation}\label{eq:ba10}
\frac{\big| \vec{y} - \vec{\xi} ( \tau ) \big|
}{
y^0 - \xi ^0 ( \tau ) }
< \mu 
\end{equation}
for $\tau _{\mathrm{min}} < \tau < \hat{\tau}{} _0$ with some 
$\hat{\tau}{}_0$. This inequality demonstrates that $\boldsymbol{y}$ 
is in the future of the worldline and that the 
worldline is bounded away from the light-cone of $\boldsymbol{y}$ 
as well.
\end{proof}

Because of this result, we may simply say that a worldline is bounded 
away from the past light-cone, without any reference to a specific event 
$\boldsymbol{x}$. 

We now show that the field of a point charge is finite if its
worldline is bounded away from the past light-cone. 
Using the notation of (\ref{eq:chi}) and (\ref{eq:psi}), the 
potential (\ref{eq:Apoint2}) reads
\begin{equation}\label{eq:Apoint4}
\fl
A_a  ( \boldsymbol{x} ) =  
- \, \frac{q}{\ell}  \int _0 ^{\infty} \!
\frac{
\Big( \eta _{a0} \, + \, 
\mathrm{tanh} \, \chi (\zeta , \boldsymbol{x}  ) 
\, \nu ^{\rho} (\zeta , \boldsymbol{x} ) \,  
\eta _{a \rho} \Big) \; J_1 \big( \zeta / \ell \big) \, d \zeta
}{
\mathrm{cosh} \, \psi (\zeta , \boldsymbol{x}  )  \,
\Big( \, 1 \, - \, 
\mathrm{tanh} \, \psi  (\zeta , \boldsymbol{x} )  
\, \mathrm{tanh} \, \chi (\zeta , \boldsymbol{x}  ) \, 
\mu ^{\rho} (\zeta , \boldsymbol{x} )  \, 
\nu _{\rho} (\zeta , \boldsymbol{x} ) \Big) \; \zeta
} 
\; .
\end{equation}
For expressing the field strength tensor at a chosen
event $\boldsymbol{x}$ using the notation of (\ref{eq:chi})
and (\ref{eq:psi}), we may choose the inertial 
coordinate system such that $\dot{\xi}{}^a \big( \tau _R
( \boldsymbol{x} ) \big) = \delta ^a _0$. Then the electric 
and magnetic components of the field strength tensor 
(\ref{eq:Fpoint1}) read, respectively,
\begin{equation}\label{eq:Fe}
\fl
F_{0 \sigma}  (\boldsymbol{x} ) \! =  \! 
\frac{q \, n_{\sigma} ( \boldsymbol{x} )}{2 \ell ^2} \,   
+ \, \frac{q}{\ell ^2} \!
\int _{0} ^{\infty} \!
\frac{
\Big( \mathrm{tanh} \, \psi ( \zeta , \boldsymbol{x} ) \,
\mu _{\sigma}  ( \zeta , \boldsymbol{x} ) -
\mathrm{tanh} \, \chi ( \zeta , \boldsymbol{x} ) \,
 \nu _{\sigma}  ( \zeta , \boldsymbol{x} )
\Big)  J_2 \big( \zeta / \ell \big) \, d \zeta
}{
\Big( \, 1 \, - \, 
\mathrm{tanh} \, \psi  (\zeta , \boldsymbol{x} )  
\, \mathrm{tanh} \, \chi (\zeta , \boldsymbol{x}  ) \, 
\mu ^{\rho} (\zeta , \boldsymbol{x} )  \, 
\nu _{\rho} (\zeta , \boldsymbol{x} ) \Big) \; \zeta
} 
\, , 
\end{equation}
\begin{eqnarray}\label{eq:Fm}
\fl
F_{\rho \sigma}  (\boldsymbol{x} ) \, =  \, 
\\
\nonumber\fl\quad
\frac{q}{\ell ^2} 
\int _{0} ^{\infty}
\frac{
\mathrm{tanh} \, \psi ( \zeta , \boldsymbol{x} ) \,
\mathrm{tanh} \, \chi ( \zeta , \boldsymbol{x} ) \,
\Big( 
\nu _{\sigma}  ( \zeta , \boldsymbol{x} )
\mu _{\rho}  ( \zeta , \boldsymbol{x} ) 
-
\mu _{\sigma}  ( \zeta , \boldsymbol{x} )
\nu _{\rho}  ( \zeta , \boldsymbol{x} ) \Big)
\; J_2 \big( \zeta / \ell \big) \, d \zeta
}{
\Big( \, 1 \, - \, 
\mathrm{tanh} \, \psi  (\zeta , \boldsymbol{x} )  
\, \mathrm{tanh} \, \chi (\zeta , \boldsymbol{x}  ) \, 
\mu ^{\rho} (\zeta , \boldsymbol{x} )  \, 
\nu _{\rho} (\zeta , \boldsymbol{x} ) \Big) 
\; \zeta} 
\, . 
\end{eqnarray}
In a coordinate system with $\dot{\xi}{}^a ( \tau _0 ) 
= \delta ^a _0$ the self-force (\ref{eq:sf2a})
is given by
\begin{eqnarray}\label{eq:sf3}
\fl f^{\textup{s}}_a ( \tau _0 ) \,  = 
\\
\nonumber
\fl -
\frac{q^2 }{\ell ^2} 
\, \eta _{a \sigma}
\int _{0} ^{\infty}
\frac{
\Big( \mathrm{tanh} \, \psi \big( \zeta , \boldsymbol{\xi} (\tau _0 ) \big) \,
\mu ^{\sigma}  \big( \zeta , \boldsymbol{\xi} (\tau _0 ) \big) -
\mathrm{tanh} \, \chi \big( \zeta , \boldsymbol{\xi} (\tau _0 ) \big) \,
 \nu ^{\sigma}  \big( \zeta , \boldsymbol{\xi} (\tau _0 ) \big)
\Big) \; J_2 \big( \zeta / \ell \big) \, d \zeta
}{
\Big( \, 1 \, - \, 
\mathrm{tanh} \, \psi  \big( \zeta , \boldsymbol{\xi} (\tau _0 ) \big)  
\, \mathrm{tanh} \, \chi \big( \zeta , \boldsymbol{\xi} (\tau _0 ) \big) \, 
\mu ^{\rho} \big( \zeta , \boldsymbol{\xi} (\tau _0 ) \big)  \, 
\nu _{\rho} \big( \zeta , \boldsymbol{\xi} (\tau _0 ) \big) \Big) 
\; \zeta} 
\, .
\end{eqnarray}
 
We can now prove the following result.

\begin{proposition}\label{prop:bafc}
If the worldline $\boldsymbol{\xi}$ is bounded away from the 
past light-cone, the integrals on the right-hand sides of 
(\ref{eq:Apoint4}), (\ref{eq:Fe}) and (\ref{eq:Fm}) are absolutely 
convergent for all $\boldsymbol{x} \in \mathbb{R}^4$; the integral
on the right-hand side of (\ref{eq:sf3}) is absolutely convergent
for all $\tau _{\mathrm{min}} < \tau _0 < \tau _{\mathrm{max}}$.
\end{proposition}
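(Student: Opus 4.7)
The plan is to bound the modulus of each integrand by a universal multiple of $|J_n(\zeta/\ell)|/\zeta$ with $n\in\{1,2\}$, and then invoke the standard Bessel asymptotics $J_n(z)\sim z^n/(2^n n!)$ as $z\to 0$ and $J_n(z)=O(z^{-1/2})$ as $z\to\infty$. These imply that, for every $n\ge 1$, $|J_n(\zeta/\ell)|/\zeta$ behaves like $\zeta^{n-1}$ near the origin and like $\zeta^{-3/2}$ at infinity, so $\int_{0}^{\infty}|J_n(\zeta/\ell)|\,\zeta^{-1}\,d\zeta<\infty$. For the self-force (\ref{eq:sf3}), where the reference event is $\boldsymbol{\xi}(\tau_0)$ itself, Proposition~\ref{prop:baftlc} is what allows the hypothesis to be applied at that worldline event.

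Next I would scrutinise the four integrands in (\ref{eq:Apoint4}), (\ref{eq:Fe}), (\ref{eq:Fm}) and (\ref{eq:sf3}) as written in the parametrisation (\ref{eq:chi})--(\ref{eq:psi}). Apart from the Bessel factor, each integrand is a rational expression in $\tanh\chi$, $\tanh\psi$, $\cosh\psi$ and the components of the spatial unit vectors $\boldsymbol{\nu}$ and $\boldsymbol{\mu}$, divided by $\zeta$. The numerators are polynomial in $\tanh\chi$, $\tanh\psi$ and in components of $\boldsymbol{\nu},\boldsymbol{\mu}$, all of which are bounded by one in modulus, so the numerators are uniformly bounded in $\zeta$. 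The denominators always contain $\zeta$ together with the common factor $1-\tanh\psi\,\tanh\chi\,\mu^{\rho}\nu_{\rho}$, and (\ref{eq:Apoint4}) contains an additional factor of $\cosh\psi\ge 1$ which is only helpful.

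The crux is therefore to bound $1-\tanh\psi\,\tanh\chi\,\mu^{\rho}\nu_{\rho}$ away from zero uniformly in $\zeta$. This is where the hypothesis enters decisively: boundedness of $\psi(\zeta,\boldsymbol{x})$ gives some $\psi_{\max}<\infty$ with $|\tanh\psi(\zeta,\boldsymbol{x})|\le\tanh\psi_{\max}<1$ uniformly in $\zeta$. Since $|\tanh\chi|\le 1$ and, by the Cauchy--Schwarz inequality applied to the spatial unit vectors, $|\mu^{\rho}\nu_{\rho}|\le 1$, one obtains
\[
\big|\,1-\tanh\psi\,\tanh\chi\,\mu^{\rho}\nu_{\rho}\,\big|\;\ge\;1-\tanh\psi_{\max}\;>\;0
\]
uniformly in $\zeta$. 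Combined with the bounded numerators, this majorises the modulus of each integrand by a constant $C_{\boldsymbol{x}}$ (depending only on $\psi_{\max}$) times $|J_n(\zeta/\ell)|/\zeta$, with $n=1$ for the potential and $n=2$ for the field and self-force. Absolute convergence then follows from the Bessel integrability established in the first paragraph.

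The step I expect to require the most care is exactly this uniform lower bound on $|1-\tanh\psi\,\tanh\chi\,\mu^{\rho}\nu_{\rho}|$: both $\tanh\chi$ and $\mu^{\rho}\nu_{\rho}$ may in principle approach $\pm 1$ as $\zeta\to\infty$ (the source's four-velocity becoming arbitrarily boosted relative to the chosen frame, or $\boldsymbol{\nu}$ and $\boldsymbol{\mu}$ becoming asymptotically parallel), and without the bounded-away-from-light-cone hypothesis the denominator could genuinely degenerate. Once that bound is secured, the remainder of the argument is a routine dominated-convergence estimate based on the Bessel asymptotics.
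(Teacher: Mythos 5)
Your proposal is correct and follows essentially the same route as the paper's proof: both arguments majorise each integrand by $K\,|J_k(\zeta/\ell)|/\zeta$ after observing that the boundedness of $\psi$ forces $\tanh\psi\,\tanh\chi\,\mu^{\rho}\nu_{\rho}$ to stay bounded away from $1$ (since $|\tanh\chi|\le 1$ and $|\mu^{\rho}\nu_{\rho}|\le 1$), and then invoke the $\zeta^{-1/2}$ decay of the Bessel functions at infinity together with their regularity at the origin. Your explicit remark that Proposition~\ref{prop:baftlc} is what lets the hypothesis be invoked at the event $\boldsymbol{\xi}(\tau_0)$ itself is a useful clarification that the paper leaves implicit.
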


\begin{proof} As the integrands in (\ref{eq:Apoint4}), (\ref{eq:Fe}), 
(\ref{eq:Fm}) and (\ref{eq:sf3}) stay finite for $\zeta \to 0$, we only 
have to verify that they fall off sufficiently quickly for $\zeta \to 
\infty$. We first observe that $\mathrm{tanh} \, \alpha$ increases from 
$-1$ to $1$ if $\alpha$ varies from $- \infty$ to $\infty$, and that 
$|\mu ^{\rho} \nu _{\rho} | \le 1$. Therefore, the condition of 
$\psi$ being bounded implies that  $\mathrm{tanh} \, {\psi} \, 
\mathrm{tanh} \, \chi \, \mu ^{\rho} \nu _{\rho}$ is bounded away 
from 1. Thus, in each of the four equations (\ref{eq:Apoint4}), 
(\ref{eq:Fe}), (\ref{eq:Fm}) and(\ref{eq:sf3}) the modulus of the 
integrand is bounded by a term of the form $K \big| J_k 
\big( \zeta / \ell \big) \big| / \zeta$ where $K$ is independent
of $\zeta$ and $k$ is either 1 or 2. As $\big| J_k 
\big( \zeta / \ell \big) \big|$ falls off like $\zeta ^{-1/2}$ for 
$\zeta \to \infty$, this guarantees absolute convergence of the integral. 
\end{proof}
This proposition demonstrates that, in particular, the self-force
is finite for a large class of worldlines. Actually, the requirement 
of the worldline being bounded away from the past light-cone is 
sufficient but not necessary for finiteness of the self-force. The 
following proposition shows that there is another
class of worldlines, including ones which are \emph{not} bounded
away from the past light-cone, for which the self-force is finite.     
\begin{proposition}\label{prop:1d}
Assume that the worldline $\boldsymbol{\xi}$ is confined to a 
two-dimensional timelike plane $P$ in Minkowski spacetime.
Then the integral on the right-hand side of  
(\ref{eq:sf3}) is absolutely convergent for all 
$\tau _{\mathrm{min}} < \tau _0 < \tau _{\mathrm{max}}$. 
\end{proposition}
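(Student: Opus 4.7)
The plan is to exploit the planar confinement by choosing coordinates adapted to $P$ and then recognising that the integrand of (\ref{eq:sf3}) reduces, via the relativistic velocity-addition identity, to something whose modulus is bounded independently of the past behaviour of the worldline.

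First I would choose inertial coordinates so that $\dot{\xi}{}^a(\tau_0) = \delta^a_0$ (as in the derivation of (\ref{eq:sf3})) and so that $P$ coincides with the $(x^0,x^1)$-plane; this is possible because $\dot{\boldsymbol{\xi}}(\tau_0) \in P$ is a future-directed timelike unit vector in the $(-,+)$-signature 2-plane $P$. Since $\boldsymbol{\xi}(\tau) \in P$ for all $\tau$ and also $\boldsymbol{\xi}(\tau_0) \in P$, both $\dot{\boldsymbol{\xi}}(\tau)$ and the displacement $\boldsymbol{\xi}(\tau_0) - \boldsymbol{\xi}(\tau)$ lie in $P$. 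I can therefore introduce signed rapidities $\alpha(\tau)$ and $\beta(\zeta,\boldsymbol{\xi}(\tau_0))$ and write $\dot{\xi}{}^a(\tau) = \cosh\alpha\,\delta^a_0 + \sinh\alpha\,\delta^a_1$ and $x^a - \xi^a(\tau) = \zeta(\cosh\beta\,\delta^a_0 + \sinh\beta\,\delta^a_1)$, with $\alpha(\tau_0)=0$.

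Comparing with (\ref{eq:chi}) and (\ref{eq:psi}), one finds $\chi = |\alpha|$, $\nu^1 = \mathrm{sgn}(\alpha)$, $\nu^2 = \nu^3 = 0$, and analogously $\psi = |\beta|$, $\mu^1 = \mathrm{sgn}(\beta)$, $\mu^2 = \mu^3 = 0$. The sign factors combine cleanly, so that $\tanh\chi\,\nu^1 = \tanh\alpha$, $\tanh\psi\,\mu^1 = \tanh\beta$, and $\tanh\psi\,\tanh\chi\,\mu^\rho\nu_\rho = \tanh\alpha\,\tanh\beta$. In (\ref{eq:sf3}) the spatial sum collapses to its $\sigma = 1$ contribution, and the integrand becomes
\begin{equation*}
\frac{\tanh\beta - \tanh\alpha}{1 - \tanh\beta\,\tanh\alpha}\cdot\frac{J_2(\zeta/\ell)}{\zeta}.
\end{equation*}

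The key observation is then the relativistic velocity-addition identity
\begin{equation*}
\frac{\tanh\beta - \tanh\alpha}{1 - \tanh\beta\,\tanh\alpha} \, = \, \tanh(\beta - \alpha),
\end{equation*}
whose modulus never exceeds $1$, regardless of how $\alpha$ or $\beta$ may diverge as $\zeta \to \infty$. Hence the modulus of the integrand is bounded uniformly by $|J_2(\zeta/\ell)|/\zeta$, which is integrable at $\zeta = 0$ (since $J_2(z) \sim z^2/8$) and at $\zeta = \infty$ (since $|J_2(z)|$ decays like $z^{-1/2}$), exactly as in the proof of Proposition \ref{prop:bafc}. Absolute convergence follows, and incidentally only the component $f^{\textup{s}}_1(\tau_0)$ is non-zero, consistent with orthogonality to $\dot{\boldsymbol{\xi}}(\tau_0)$.

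The only mild obstacle is the sign bookkeeping needed to pass from the non-negative scalars $\chi,\psi$ of (\ref{eq:chi}), (\ref{eq:psi}) to the signed rapidities $\alpha,\beta$; once this is done, the proposition follows essentially by inspection. Geometrically, the planar assumption eliminates the angular degrees of freedom that, in the general case of Proposition \ref{prop:bafc}, could drive the denominator $1 - \tanh\psi\,\tanh\chi\,\mu^\rho\nu_\rho$ to zero through misalignment of $\boldsymbol{\mu}$ and $\boldsymbol{\nu}$; in the planar case, misalignment reduces to a sign, which the velocity-addition identity absorbs.
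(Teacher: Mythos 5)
Your proposal is correct and follows essentially the same route as the paper: the planar assumption collapses $\boldsymbol{\mu}$ and $\boldsymbol{\nu}$ onto a single spatial direction, the integrand reduces via the hyperbolic-tangent subtraction (velocity-addition) identity to $\tanh(\beta-\alpha)\,J_2(\zeta/\ell)/\zeta$, and $|\tanh|\le 1$ together with the $\zeta^{-1/2}$ decay of $J_2$ gives absolute convergence. Your signed-rapidity bookkeeping is just a cleaner way of writing the paper's $\nu_\sigma=\pm\mu_\sigma$ and $\tanh(\psi\mp\chi)$.
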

\begin{proof}
As the worldline is in a timelike plane, the unit vectors
$\nu _{\sigma}  \big( \zeta , \boldsymbol{\xi} ( \tau _0) \big)$ and 
$\mu _{\sigma}  \big( \zeta , \boldsymbol{\xi} ( \tau _0) \big)$ must 
be constant and equal or opposite, $\nu _{\sigma} = \pm \mu _{\sigma}$.
Then  (\ref{eq:sf3}) simplifies to
\begin{equation}\label{eq:sf4}
\fl f^{\textup{s}}_a ( \tau _0 ) \,  = \, 
- \, \frac{q^2}{\ell ^2} 
\, \eta _{a \sigma} \, \mu ^{\sigma}
\int _{0} ^{\infty}
\,
\mathrm{tanh} \Big( \psi \big( \zeta , \boldsymbol{\xi} (\tau _0 ) \big) 
\mp \chi \big( \zeta , \boldsymbol{\xi} (\tau _0 ) \big) \Big)
\,
\frac{J_2 ( \zeta / \ell )  \, d \zeta}{\zeta}  
\, .
\end{equation}
As $\big| \mathrm{tanh} (\alpha) \big| \le 1$, the modulus of
the integrand in (\ref{eq:sf4}) is bounded by $\big| J_2 
\big( \zeta / \ell \big) \big| / \zeta$. As $\big| J_2 
\big( \zeta / \ell \big) \big|$ falls off like $\zeta ^{-1/2}$ 
for $\zeta \to \infty$, this guarantees absolute convergence 
of the integral. 
\end{proof}


\subsection*{Example 3: A worldline with diverging self-force integral}

{}From Propositions~\ref{prop:bafc} and \ref{prop:1d} we know that 
the self-force is finite if the particle's worldline  is bounded 
away from the light-cone or if it is contained in a timelike 
plane. Actually, the proof of Proposition~\ref{prop:1d} can be 
generalized to the case that the worldline, rather than being 
confined to $P$, approaches $P$ sufficiently quickly for 
$\tau \to \tau _{\mathrm{min}}$.  This leaves only a class 
of rather contrived motions for which the self-force integral 
could diverge: The worldline must approach the light-cone for 
$\tau \to \tau _{\mathrm{min}}$ with a sufficiently large 
tangential velocity component. In this section we present such an 
example for which the self-force integral, indeed, diverges at
one point.

We find it convenient to give the worldline in terms of a 
\emph{past-oriented} curve parameter $\gamma$ which is 
\emph{not} proper time, 
\begin{equation}\label{eq:os}
\xi ^a ( \gamma ) \, = \, \gamma \Big\{
-\sqrt{1+s( \gamma )} \delta _0 ^a + \sqrt{s( \gamma )} 
\Big( \mathrm{cos} \big( \beta ( \gamma ) \big) \delta _1^a
+ \mathrm{sin} \big( \beta ( \gamma ) \big) \delta _2 ^a \Big)
\Big\}
\end{equation}
where
\begin{equation}\label{eq:s}
s ( \gamma ) \, = \, 
\frac{\gamma}{\ell \, \mathrm{tanh} \, ( \gamma / \ell ) }
\end{equation}
and 
\begin{equation}\label{eq:beta}
\fl
\beta ( \gamma ) \, = \, 
\frac{1}{2} 
\left( 
\sqrt{\pi} C_F \big( \sqrt{2s(\gamma ) / \pi} \big)
-
\sqrt{\pi} S_F \big( \sqrt{2s(\gamma ) / \pi} \big)
-
\frac{
\mathrm{sin} \big( s ( \gamma ) \big) 
+ \mathrm{cos} \big( s ( \gamma ) \big)
}{
\sqrt{2s(\gamma )}}
\right)
\, .
\end{equation}
Here $C_F$ and $S_F$ are the Fresnel-C and Fresnel-S 
functions. Note that
\begin{equation}\label{eq:betap}
\beta ' ( \gamma ) \, = \, 
\frac{
s'( \gamma ) \mathrm{sin} \big( s ( \gamma ) + \pi / 4 \big)
}{
4 \, s(\gamma ) ^{3/2}}
\, .
\end{equation}
$\boldsymbol{\xi}$ is an analytic timelike curve that 
approaches the past light-cone with an oscillatory 
tangential velocity component, see Fig.~\ref{fig:os}.
The curve is, indeed, everywhere timelike as can be 
seen from
\begin{eqnarray}\label{eq:xitime}
\eta _{ab} \frac{d \xi ^a ( \gamma )}{d\gamma}
\frac{d \xi ^b ( \gamma )}{d\gamma}
&=&
-1 + \frac{\gamma ^2 s'( \gamma )^2}{4 s( \gamma )
\big( 1 + s ( \gamma ) \big)}
+
\frac{\gamma ^2 s'( \gamma )^2}{16 s( \gamma )^2}
\, \mathrm{sin} ^2 \big(s( \gamma ) +\pi /4 \big)
\nonumber
\\
&<& -1 + \frac{1}{4} + \frac{1}{16} = \frac{-11}{16}
\, .
\end{eqnarray}
It can be shown that the 4-acceleration of $\boldsymbol{\xi}$ is 
bounded and that its future is all of Minkowski
spacetime.
 
\begin{figure}
\setlength{\unitlength}{1cm}
     \psfrag{x}{{}} 
    \psfrag{y}{{}} 
     \psfrag{c}{{}} 
     \psfrag{d}{{}} 
    \psfrag{a}{{\hspace{-2em}\raisebox{-0.5em}{\small $0.99$}}}
    \psfrag{b}{{\hspace{-1em}\raisebox{-0.5em}{\small $1$}}} 
\begin{picture}(17,6.4)
\put(1.3,0){\epsfig{figure=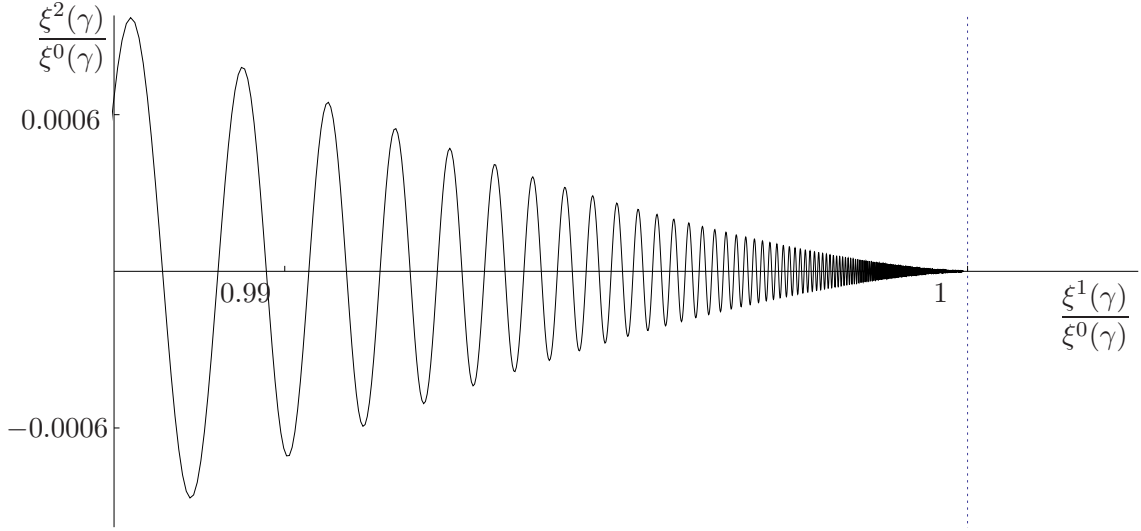,width=14cm}}
\put(0.2,5.3){\small$0.0006$}
\put(0,1.2){\small$-0.0006$}
\put(0.4,6.4){\small$\displaystyle\frac{\xi^2(\gamma )}{\xi^0(\gamma )}$}
\put(14,2.7){\small$\displaystyle\frac{\xi^1(\gamma )}{\xi^0(\gamma )}$}
\end{picture}
\caption{The worldline \protect{(\ref{eq:os})} approaches 
the past light-cone (dotted circle) with an oscillatory 
tangential velocity component}
\label{fig:os}
\end{figure}

We show that for this worldline $\boldsymbol{\xi}$ the 
electromagnetic field (\ref{eq:Fpoint1}) diverges on a 
hyperplane implying that the self-force becomes infinite at one instant.
To that end we have to rewrite the integral in (\ref{eq:Fpoint1}) as
an integral over $\gamma$ and to investigate the behavior of the 
integrand for $\gamma \to \infty$. 
We first observe that, if $\gamma$ tends to $\infty$,
\begin{eqnarray}\label{eq:slim}
s( \gamma ) &=& \frac{\gamma}{\ell} + O \big( \gamma ^{-n} \big)
\, , \quad
s'( \gamma ) = \frac{\gamma}{\ell} + O \big( \gamma ^{-n} \big)
\, , \nonumber\\
\beta '( \gamma ) &=& \frac{\sqrt{\ell} \, \mathrm{sin} \Big(
\frac{\gamma}{ \ell} + \frac{\pi}{4} \Big)}{\gamma ^{3/2}}
 + O \big( \gamma ^{-n} \big)
\end{eqnarray}
for all $n \in \mathbb{N}$. Moreover, from the standard asymptotic
formulas for the Fresnel functions we find
\begin{equation}\label{eq:betalim}
\mathrm{cos} \big( \beta ( \gamma ) \big) = 1 +  O \big( \gamma ^{-3} \big)
\, , \qquad
\mathrm{sin} \big( \beta ( \gamma ) \big) = O \big( \gamma ^{-3/2} \big)
\, .
\end{equation}
With the help of these formulas we find that the parameter $\zeta$ 
which is used as the integration variable in (\ref{eq:Fpoint1}) is 
related to our curve parameter $\gamma$ by
\begin{equation}\label{eq:zetagamma}
\zeta ^2 = - \big( x^a - \xi ^a (\gamma ) \big) 
\big( x_a - \xi _a (\gamma ) \big) 
= \gamma ^2 + 2 \frac{\gamma ^{3/2}}{\sqrt{\ell}} \big( x^0+x^1 \big)
+ O \big( \gamma ^0 \big) \, ,
\end{equation}
hence
\begin{equation}\label{eq:zetalimit}
\zeta  = \gamma + \sqrt{ \frac{\gamma}{\ell}} \big( x^0+x^1 \big)
- \frac{(x^0+x^1)^2}{\ell} + O \big( \gamma ^{-1/2} \big) \, 
\, .
\end{equation}
Inserting this expression into the well-known asymptotic formula for 
the Bessel function $J_2$ yields
\begin{eqnarray}\label{eq:J2limit}
J_2 \Big( \frac{\zeta}{\ell} \Big) = - \sqrt{\frac{2 \ell}{\pi \zeta}}
\, \mathrm{sin} \Big( \frac{\zeta}{\ell} + \frac{\pi}{4} \Big) + 
O \big( \zeta ^{-3/2} \big) 
\\
\nonumber
=
- \sqrt{\frac{2 \ell}{\pi \gamma}}
\mathrm{sin} \Big( \frac{\gamma}{\ell} + \frac{\pi}{4} 
+ \frac{\sqrt{\gamma} (x^0+x^1)}{\ell ^{3/2}} 
- \frac{(x^0+x^1)^2}{\ell ^2}
\Big) + 
O \big( \gamma ^{-1} \big) 
\, .
\end{eqnarray}
After these preparations, we are ready to evaluate the integral in
(\ref{eq:Fpoint1}). If we use $\gamma$ as the integration variable,
writing $\gamma _R ( \boldsymbol{x} )$ for the parameter value that
corresponds to $\tau _R ( \boldsymbol{x} )$ and thus to $\zeta =0$, 
this integral reads 
\begin{eqnarray}\label{eq:Iab}
I_{ab}
&=&
\int _{\gamma _R ( \boldsymbol{x} )} ^{\infty}
\frac{
\left( \big( x_b -\xi _b ( \gamma  ) \big) \frac{d \xi _a ( \gamma)}{d \gamma}
-
\big( x_a -\xi _a ( \gamma  ) \big) \frac{d \xi _b ( \gamma)}{d \gamma} \right)
}{
\frac{d \xi ^c ( \gamma )}{d \gamma} 
\big( x_c -\xi _c ( \gamma  ) \big)
} 
\frac{
J_2 (\zeta / \ell )}{\zeta} \, \frac{d \zeta}{d \gamma} \, d \gamma
\\
&=&
\int _{\gamma _R ( \boldsymbol{x} )} ^{\infty}
\frac{
\left( \big( x_b -\xi _b ( \gamma  ) \big) \frac{d \xi _a ( \gamma)}{d \gamma}
-
\big( x_a -\xi _a ( \gamma  ) \big) \frac{d \xi _b ( \gamma)}{d \gamma} \right)
}{
\zeta \, \frac{d \zeta}{d \gamma} } 
\,
\frac{
J_2 (\zeta / \ell )}{\zeta} \, \frac{d \zeta}{d \gamma} \, d \gamma
\nonumber
\\
&=&
\int _{\gamma _R ( \boldsymbol{x} )} ^{\infty}
\left( \big( x_b -\xi _b ( \gamma  ) \big) \frac{d \xi _a ( \gamma)}{d \gamma}
-
\big( x_a -\xi _a ( \gamma  ) \big) \frac{d \xi _b ( \gamma)}{d \gamma} \right)
\,
\frac{J_2 (\zeta / \ell )}{\zeta ^2} \, d \gamma
\, .
\nonumber
\end{eqnarray}
We evaluate this equation for $a=2$ and $b=0$. As
\begin{equation}\label{eq:02}
\fl \big( x_0-\xi _0 ( \gamma) \big) \frac{d \xi _2 ( \gamma )}{d \gamma}
-
\big( x_2-\xi _2 ( \gamma) \big) \frac{d \xi _0 ( \gamma )}{d \gamma}
=
\frac{- \gamma ^{3/2}}{4 \sqrt{\ell}} 
\, \mathrm{sin} \Big( \frac{\gamma}{\ell} + \frac{\pi}{4} \Big)
+ O \big( \gamma ^{1/2} \big) \, ,
\end{equation}
we find with our asymptotic formula for $J_2 (\zeta / \ell )$ from above
\begin{eqnarray}\label{eq:I20}
\fl I_{20}
&=&
\int _{\gamma _R ( \boldsymbol{x} )} ^{\infty}
\left\{
\frac{
\ell \, \mathrm{sin} 
\Big( \frac{\gamma}{\ell} + \frac{\pi}{4} \Big)
}{
2 \sqrt{2 \pi} \, \gamma
}
\, 
\mathrm{sin} \Big( \frac{\gamma}{\ell} + \frac{\pi}{4} 
+ \frac{\sqrt{\gamma} (x^0+x^1)}{\ell ^{3/2}}  
- \frac{(x^0+x^1)^2}{\ell ^2} \Big) 
+ O \big( \gamma ^{-3/2} \big)
\right\}
\, d \gamma
\nonumber
\\
\fl &=&
\frac{\ell}{4 \sqrt{2 \pi}}
\int _{\gamma _R ( \boldsymbol{x} )} ^{\infty}
\!\!
\left\{
\mathrm{cos} 
\Big( \frac{\sqrt{\gamma} (x^0+x^1)}{\ell ^{3/2}}  
- \frac{(x^0+x^1)^2}{\ell ^2} \Big) 
\right.
\\ \fl &&
\left. 
\hspace{7em} -
\mathrm{cos} 
\Big( \frac{2 \gamma}{\ell} + \frac{\pi}{2} +
\frac{\sqrt{\gamma} (x^0+x^1)}{\ell ^{3/2}}  
- \frac{(x^0+x^1)^2}{\ell ^2} \Big) 
\right\} \frac{d \gamma}{\gamma}
+ \, \dots
\nonumber
\end{eqnarray}
Here and in the following, the ellipses indicate a term that 
is finite for all $\boldsymbol{x}$. The integral over the 
second term is finite for all $\boldsymbol{x}$. If we decompose 
the remaining integral into an integration from $\gamma _R 
( \boldsymbol{x})$ to $\ell$ and an integration from $\ell$ 
to infinity we find
\begin{eqnarray}\label{eq:I20b}
\fl I_{20}
&=&
\frac{\ell}{4 \sqrt{2 \pi}}
\int _{\ell} ^{\infty}
\mathrm{cos} 
\Big( \frac{\sqrt{\gamma} (x^0+x^1)}{\ell ^{3/2}}  
- \frac{(x^0+x^1)^2}{\ell ^2} \Big) 
\frac{d \gamma}{\gamma}
+ \, \dots
\nonumber
\\ \fl
&=&
\frac{\ell}{4 \sqrt{2 \pi}}
\left\{-2 \, \mathrm{cos} \left( \frac{(x^0+x^1)^2}{\ell ^2} \right)
\, \mathrm{Ci} \left( \frac{|x^0+x^1|^{3/2}}{\ell ^{3/2}} \right)
\right.
\\
\nonumber
\fl &&\hspace{7em}
\left.
-
\, \mathrm{sin} \left( \frac{(x^0+x^1)^2}{\ell ^2} \right)
\left(
\pi - 2 \, \mathrm{Si} \left( \frac{|x^0+x^1|^{3/2}}{\ell ^{3/2}} \right)
\right)
\right\} \, + \, \dots
\end{eqnarray}
Here Ci and Si denote the cosine integral and the sine integral,
respectively. As the cosine integral diverges logarithmically
if its argument approaches zero, we have found that the 
(20)-component of the electromagnetic field according to 
(\ref{eq:Fpoint1}) is given by
\begin{equation}\label{eq:I20c}
F_{20} ( \boldsymbol{x} ) \, = \, - \, \frac{3 q}{4 \sqrt{2 \pi} \ell}
\, \mathrm{log} \left( \frac{| x^0+x^1 |}{\ell} \right)
\, + \, \dots \, ,
\end{equation}
i.e., that this field component diverges on the lightlike hyperplane
$x^0+x^1=0$. This may be interpreted as a shock front propagating at 
the speed of light. The same divergence is found, by a completely 
analogous calculation, for the component $F_{21}=-F_{12}$,
\begin{equation}\label{eq:I21c}
F_{21}  ( \boldsymbol{x} ) \, = \, - \, \frac{3 q}{4 \sqrt{2 \pi} \ell}
\, \mathrm{log} \left( \frac{| x^0+x^1 |}{\ell} \right)
\, + \, \dots \, ,
\end{equation}
while all other
components are finite everywhere. As a consequence, the self-force
becomes infinite at the instant when the charged particle crosses
the hypersurface $x^0+x^1=0$ which happens at the origin of the 
coordinate system. This means that at this instant an infinite
external \cRed{Minkowski} force is necessary to keep the particle on its prescribed
worldline.

Note that the divergence is logarithmic in the neighborhood of a
lightlike hyperplane and thus rather mild, since any timelike
worldline crosses it at most once. It is true that  
a charged particle would experience an infinite \cRed{relativistic} Lorentz
force at the instant when it crosses the hypersurface $x^0+x^1 =0$.
However, as $\int^y\mathrm{log} |x| \, dx$ is finite-valued and 
continuous at $y=0$, the particle's velocity would still be 
finite-valued and continuous, i.e., the particle's worldline
would still be a $C^1$ curve. 

\section{The Abraham-Lorentz-Dirac limit}\label{sec:LD}
 
In the standard Maxwell-Lorentz theory with point charges, i.e., for $\ell \to 0$,
  the self-force becomes infinite in (\ref{eq:sf1}). Dirac's solution
  to give a meaning to the equation of motion (\ref{eq:eom}) in this
  case was to assume that the inertial mass became negative infinite
  in order to cancel the infinite contribution from the self-force.
After this cancellation, one ends up with the
Abraham-Lorentz-Dirac equation which involves a
renormalized (or dressed) mass which is positive and finite. It is
interesting to see how the Abraham-Lorentz-Dirac equation
is reproduced from the Bopp-Podolsky theory in the limit $\ell \to
0$. To that end we substitute in (\ref{eq:sf2a}) the integration
variable $\zeta = \ell \sigma$,
\begin{equation}\label{eq:sf2b}
f^{\textup{s}}_a ( \tau _0 ) \,  = \, 
\frac{q^2 \dot{\xi}{}^b ( \tau _0 )}{\ell } 
\int _{0} ^{\infty}
\frac{\partial W_{ab}}{\partial \zeta}
\big( \ell \sigma, \boldsymbol{\xi} ( \tau _0 ) \big) 
\frac{d^2 \chi (\sigma )}{d \sigma ^2}
\, d \sigma
\end{equation}
where
\begin{equation}\label{eq:chi2}
\chi ( \sigma ) = \int _{\sigma} ^{\infty} \left(
\int _{\sigma '} ^{\infty} 
\frac{J_1 ( \sigma '' )}{\sigma ''} d \sigma '' \right) 
d \sigma '
\, .
\end{equation}
With
\begin{equation}\label{eq:chipp}
\chi (0 ) = 1 \, , \qquad \chi ' ( 0 ) = -1
\end{equation}
two times integrating by parts yields
\begin{eqnarray}\label{eq:sf2c}
f^{\textup{s}}_a ( \tau _0 ) \,  &=& \, 
- \, \frac{q^2}{2 \ell} \ddot{\xi}{}_a ( \tau _0 )
\, + \, 
\frac{2}{3} \, \Big( \stackrel{\bm{...}}{\xi}_a \!( \tau _0 )\,
+ \dot{\xi}{}_a( \tau _0 ) \,
\dot{\xi}{}^b ( \tau _0 ) \stackrel{\bm{...}}{\xi}_b ( \tau _0 ) \Big) 
\\
\nonumber
&&+ \, \ell q^2 \dot{\xi}{}^b ( \tau _0 ) 
\int _{0} ^{\infty}
\frac{\partial ^3 W_{ab}}{\partial \zeta ^3}
\big( \ell \sigma, \boldsymbol{\xi} ( \tau _0 ) \big) 
\, \chi (\sigma ) \, d \sigma
\, .
\end{eqnarray}
The first term diverges for $\ell \to 0$.  
Following Dirac's idea of mass renormalization, 
the parameter $m$ must depend on $\ell$ and become negative infinite 
such that the ``dressed mass''
\begin{equation}\label{eq:ren}
\hat{m} \, = \, 
\lim_{\ell\to0}
\Big( m(\ell) + \frac{q^2}{2 \ell} \Big) 
\end{equation}
remains finite and positive. In this limit,
the 
equation of motion reads
\begin{eqnarray}\label{eq:LD}
\hat{m} \, \ddot{\xi}_a ( \tau  ) 
&=&
\frac{2q^2}{3} \Big( \stackrel{\bm{...}}{\xi}_a ( \tau  ) + 
\dot{\xi}{}_a ( \tau  ) \dot{\xi}^b ( \tau  )
\stackrel{\bm{...}}{\xi}_b ( \tau  ) \Big)
\, + \, f^{\textup{e}} _a ( \tau ) 
\\
\nonumber
&&+ \, \lim_{\ell\to0}
\left( \ell q^2 \dot{\xi}{}^b ( \tau _0 ) 
\int _{0} ^{\infty}
\frac{\partial ^3 W_{ab}}{\partial \zeta ^3}
\big( \ell \sigma, \boldsymbol{\xi} ( \tau _0 ) \big) 
\, \chi (\sigma ) \, d \sigma
\right) \, .
\end{eqnarray}

If the integral is bounded, the last term vanishes
and we get the Abraham-Lorentz-Dirac equation. From (\ref{eq:chi2})
we find, with the help of the well-known asymptotic formula
for the Bessel function $J_1$, that $\chi (\sigma ) = 
O ( \sigma ^{-3/2} )$ for $\sigma \to \infty$.
So the integral in (\ref{eq:LD}) is certainly bounded if 
$\partial ^3 W_{ab} / \partial \zeta ^3$ is bounded for 
$\zeta \to \infty$. 
A sufficient (but not necessary) condition 
is that $\xi$ is bounded away from the light-cone and that
all components $\dot{\xi}^a$, $\ddot{\xi}^a$, 
$\stackrel{\bm{...}}{\xi^a}$, and $\stackrel{\bm{....}}{\xi^a}$ are bounded.

\section{Conclusions}\label{sec:con}

In this paper we have demonstrated that in the Bopp-Podolsky theory
\cRed{a self-force of a charged point particle can be defined} by an
integral that is absolutely convergent for a large class of worldlines
on Minkowski spacetime. We have also provided a (contrived) example
where \cRed{its} electromagnetic field diverges on a lightlike plane,
so the self-force diverges at one point of the worldline. However,
even in this case the electromagnetic field is a locally integrable
function (i.e., a regular distribution), yielding a $C^1$ solution to
the equation of motion (\ref{eq:eom}).  This is to be contrasted with
the standard \cRed{Maxwell-Lorentz theory} in vacuo where the
self-field of a charged point particle is infinite at every point of
the particle's worldline, and the singularity is so bad that the field
energy in an arbitrarily small ball around the charge is infinite
\cRed{necessitating classical} mass renormalization. In the
Bopp-Podolsky theory there is no need for mass renormalization; the
equation of motion (\ref{eq:eom}) is an integro-differential equation
making sense with a finite inertial mass $m$.
  
It should also be emphasized that in the Born-Infeld theory,
which is to be viewed as a natural rival to the Bopp-Podolsky
theory, virtually nothing is known about finiteness of 
the self-force of an accelerated particle whose worldline 
may approach the light-cone. So it seems fair to say that,
at least in view of the motion of charged point particles,
the Bopp-Podolsky theory is in a more promising state.
    
Some questions remain open. 
\cRed{It is clearly important to establish a strategy for solving
  (\ref{eq:eom}). This may include formulating a consistent Cauchy
  problem that may include either background or dynamic fields
  satisfying the Bopp-Podolsky field equations coupled to the particle.}
Furthermore, it would be desirable 
to have a proof that \emph{all} worldlines solving (\ref{eq:eom}) are
at least $C^1$. \cRed{It is also} important  
to demonstrate that the equation of  motion (\ref{eq:eom})
with vanishing external \cRed{Minkowski} force is free of run-away solutions. 
Partial results in this direction have been found by  
\citeasnoun{FrenkelSantos1999}, but a general proof is still missing.
If these questions can be satisfactorily resolved, the Bopp-Podolsky
theory offers a physically acceptable framework in which to \cRed{explore}
the classical 
electromagnetic back-reaction on charged point particles.

\section*{Acknowledgements}
VP was financially supported by the Deutsche 
Forschungsgemeinschaft, Grant LA905/10-1 and by the 
German-Israeli-Foundation, Grant 1078/2009 during
part of this work. Moreover,
VP gratefully acknowledges support from the Deutsche 
Forschungsgemeinschaft within the Research Training
Group 1620 ``Models of Gravity''. The authors JG and 
RWT are grateful for support provided by STFC (Cockcroft 
Institute ST/G008248/1) and EPSRC (Alpha-X project EP/J018171/1). 
We also thank Dr. David Burton for useful discussions.
There are no additional data files for this article.

\section*{Appendix}

The formulation of a stress-energy-momentum tensor for the theory 
discussed in this paper appears to have had a chequered history.
\citeasnoun{Bopp1940} writes down a stress-energy-momentum tensor
that is obviously based on the decomposition (\ref{eq:Acon}) of 
the potential, but no derivation is given. \citeasnoun{Podolsky1942}
also writes down a stress-energy-momentum tensor with a promise to 
derive it in \cite{PodolskyKikuchi1944} from arguments based on a 
canonical approach. In our view this did not succeed and the 
further derivation in \cite{PodolskySchwed1948} lacked transparency. 
To our knowledge there has been no subsequent attempt to derive any 
stress-energy-momentum tensor appropriate to the theory under discussion.
In view of these comments it may be of value to put the matter into 
a modern perspective by offering a derivation based on metric variations 
of the Bopp-Podolsky action. This requires formulating the theory
on a \emph{curved} spacetime manifold.

The natural tools for this purpose exploit the exterior calculus of 
differential forms  using properties of the Hodge map $\star$ associated 
with the spacetime metric $g$ and the nilpotency of the exterior derivative 
$d$. For background material on exterior calculus we refer to 
\citeasnoun{Straumann1984} whose sign and factor conventions we 
adopt. The Bopp-Podolsky action reads 
\begin{equation}\label{eq:SBAg}
S[A,J,g]=\int_{\Man} \Lambda = 
\int_{\Man}  \big( \Lambda _{\mathrm{EM}} - A \wedge J \big)
\end{equation} 
with
\begin{equation}\label{eq:Lambda}
\Lambda _{\mathrm{EM}} =  
\frac{1}{8 \pi} F \wedge \star F - 
\frac{\ell ^2}{8 \pi} G \wedge \star G 
\end{equation}
where $F=d A$ and $ G= d\star F$ \cRed{are smooth}. Here we assume
that $J$ is a prescribed (non-dynamical) \cRed{smooth} current 3-form that satisfies
the conservation law $dJ=0$ on $\Man$.  In the body of the paper we have
restricted ourselves to the case of a flat metric and used
inertial coordinates. Then the action (\ref{eq:SBAg}) reduces to
(\ref{eq:Lag2}) where the current 4-vector $j = j^a \partial _a$ is
related to the current 3-form by $g(j, \, \cdot \, ) = \star J$.

A direct route to the symmetric dynamical (Hilbert)
stress-energy-momentum tensor, associated with
$\Lambda_{\mathrm{EM}}$, is obtained by making compact variations of
the metric tensor in $\Lambda_{\mathrm{EM}}$.  Such variations can be induced
by making independent variations $\dot e^a$ in a local $g-$orthonormal
coframe $\{e^a\},\,a=0,1,2,3$ since in such a basis $g=\eta_{ab}
e^a\otimes e^b$. Such variations give rise to a set of 3-forms
$\tau_a$ defined by
\begin{equation}\label{eq:deftauc}
\int_{\Man} \dot{\Lambda}_{\mathrm{EM}}  
= \int_{\Man} \dot e^a \wedge \tau_a
\end{equation}
and a stress-energy-momentum tensor 
$T=T_{ab} e^a \otimes e^b$ 
with components  
\begin{equation}\label{eq:Tab}
T_{ab}=\eta_{bc}\star(\tau_a \wedge e^c  ) \, .
\end{equation}
The covariant divergence of $T$ then follows as 
\begin{equation}\label{eq:divTDtau}
\nabla\cdot T= ( \star^{-1} D \tau_a ) e^a
\end{equation}
where $D$ denotes the covariant exterior derivative, see
e.g. \citeasnoun{BennTucker1988}.
Since $T$ is symmetric
\begin{equation}
D\tau_a =  d\tau_a  - i_a d e^b \wedge \tau_b
\ .
\label{AB_def_Dtau}
\end{equation}

If we make compact variations of the potential $A$ in $S$, rather
than of the metric, we \cRed{obtain} the field equation of the 
Bopp-Podolsky theory. We can derive the $\tau _a$ and 
the field equation in one go if we allow for partial variations 
of the potential and of the metric simultaneously. (Note
that the current $J$ is assumed to be given and is
kept fixed during the variation.) Then the total variation 
of the Lagrangian 4-form is written
\begin{equation}\label{eq:dotLambda}
{\Lambda{\boldsymbol{\dot{}}}} = 
\frac{1}{8 \pi} \big( F \wedge \star F \big){\boldsymbol{\dot{}}}
- \frac{\ell ^2}{8 \pi} \big( G \wedge \star G \big){\boldsymbol{\dot{}}}
- {A}{\,\boldsymbol{\dot{}}} \wedge J \, .
\end{equation}
For calculating the extremum of $\Lambda$ we use the formula
\cite{DereliGratusTucker2007}
\begin{equation}\label{eq:DGT}
(\star\Psi)\,{\boldsymbol{\dot{}}}=
e_c{\!\boldsymbol{\dot{}}} \wedge i^c(\star \Psi)
- \star( e_c{\boldsymbol{\!\dot{}}} \wedge i^c\Psi )
+ \star\Psi\,{\boldsymbol{\dot{}}}
\end{equation}
that holds for any $p-$form $\Psi$, where $i_c$ denotes the 
contraction operator (or interior derivative) with respect to the 
vector field $X_c$ defined by $e^a (X_c) = \delta ^a_c$.
Moreover, we use standard rules of exterior calculus, such as 
$ \Psi\wedge \star \Phi =   \Phi\wedge\star \Psi$
for any $p-$forms $\Psi,\Phi$, the graded derivative property
and nilpotency of $d$, and the commutativity of $d$ with 
the variations. Thus
\begin{equation}\label{eq:dot1}
\big( F \wedge \star F \big){\,\boldsymbol{\dot{}}}=
2 \, d \big( {A}{\,\boldsymbol{\dot{}}} \wedge \star F \big)
+ 2 \, {A}{\,\boldsymbol{\dot{}}} \wedge d \star F +
{e}_c{\!\boldsymbol{\dot{}}}
\wedge\big( F \wedge i^c \star F - i^c F \wedge F \big)
\end{equation}
and
\begin{eqnarray}\label{eq:dot2}
\fl
\big( G \wedge \star G \big){\,\boldsymbol{\dot{}}} 
 &=&
2 \, d \left( \Big( {e}_c{\!\boldsymbol{\dot{}}} \wedge i^c \star F
- \star ( {e}_c{\!\boldsymbol{\dot{}}} \wedge i^c F ) + \star
{F}{\,\boldsymbol{\dot{}}} \, \Big)
\wedge \star G \right)
\nonumber
\\
\fl&&
- 2 \, {A}{\,\boldsymbol{\dot{}}} \wedge d \star d \star G
- {e}_c{\!\boldsymbol{\dot{}}} \wedge \Big( 2 \, i^c \star F \wedge d \star G
- 2 \, i^c F \wedge \star d \star G
\nonumber
\\\
\fl&&\qquad
+ G \wedge i^c \star G
+ i^c G \wedge \star G \Big) \, .
\end{eqnarray}
Inserting (\ref{eq:dot1}) and (\ref{eq:dot2}) into
(\ref{eq:dotLambda}) yields
\begin{equation}\label{eq:dotLambda2}
{\Lambda}{\,\boldsymbol{\dot{}}} = d \Phi + {e}{}_c{\!\boldsymbol{\dot{}}} \wedge \tau_c
+ {A}{\,\boldsymbol{\dot{}}} \wedge \, \frac{1}{4 \pi} \, dH
- {A}{\,\boldsymbol{\dot{}}} \wedge J
\end{equation}
where
\begin{equation}\label{eq:Phi}
4 \pi \Phi =
{A}{\,\boldsymbol{\dot{}}} \wedge H -
\ell ^2 \Big( {e}{}_c{\!\boldsymbol{\dot{}}} \wedge i^c \star F
- \star ( {e}{}_c{\!\boldsymbol{\dot{}}} \wedge i^c F ) + \star
{F}{\,\boldsymbol{\dot{}}} \Big)
\wedge \star G
\, ,
\end{equation}
\begin{equation}\label{eq:H}
H = \star F + \ell ^2 \star d \star d \star F
\end{equation}
and
\begin{equation}\label{eq:tauc}
\fl
8\pi \tau _c = 
F \wedge i_c\star F - i_c F\wedge \star F 
+ \ell ^2
\Big( G \wedge i_c\star G + i_c  G \wedge \star G
-2 i_c F \wedge\star d\star G 
+2 i_c \star F \wedge d\star G  \Big) \, .
\end{equation}

The field equations 
are determined by requiring that the action is stationary
for partial variations of the potential only (i.e., $\dot{e}{}^c = 0$) that
are compactly supported $\big($i.e $\int_{\Man} d \Phi = 0 \big)$:
\begin{equation}\label{eq:dHJ}
dH = 4 \pi J  \, .
\end{equation}
Similarly for partial variations with $\dot{A}=0$, 
the $\tau _c$ from (\ref{eq:tauc}) give, via 
(\ref{eq:Tab}), the dynamical
stress-energy-momentum tensor of the Bopp-Podolsky theory
which is automatically symmetric.
A lengthy but routine calculation of $D\tau_a$ then shows, with the 
aid of the field equations  (\ref{eq:dHJ}), that the divergence 
(\ref{eq:divTDtau}) yields the \cRed{relativistic} Lorentz force,
\begin{equation}\label{eq:divT}
\nabla \cdot T
= F \big( \, \cdot \, , j \big) 
\, .
\end{equation}
A less lengthy approach to derive (\ref{eq:divT}) can be based on the
use of a one-parameter family of diffeomorphisms on the spacetime
\cRed{domain $\Man$} generated by any compactly supported vector field
$X$. Then in terms of the Lie derivative $L_X$
\begin{equation}
\label{eq:dotLambdaEM1}
L_X \Lambda _{\mathrm{EM}} = d \Phi_X + L_X e ^c \wedge \tau _c
+ \frac{1}{4\pi}
L_X A \wedge dH
\end{equation}
with the same $\tau_c$ and the same $H$ as in 
(\ref{eq:dotLambda2}) and 
\begin{equation}\label{eq:PhiX}
4 \pi \Phi_X =  
L_X{A} \wedge H - 
\ell ^2 \Big( L_X{e}{}^c \wedge i_c \star F 
- \star ( L_X{e}{}^c \wedge i_cF ) + \star L_X{F} \Big)
\wedge \star G 
\,.
\end{equation}
For the derivation of (\ref{eq:dotLambdaEM1}) we have used 
the identity
\begin{equation}\label{eq:LXstar}
L_X(\star\Psi)=
L_X e^c \wedge i_c(\star \Psi) 
- \star( L_X e^c \wedge i_c\Psi ) 
+ \star L_X\Psi
\end{equation}
which holds for any $p$-form $\Psi$ on $M$. With the 
relations $L_X = i_Xd+di_X$ and $d \Lambda _{\mathrm{EM}}
=0$ (since $\Lambda _{\mathrm{EM}}$ is a 4-form on spacetime), 
(\ref{eq:dotLambdaEM1}) yields the identity
\begin{equation}
d \,\alpha_X = \beta_X
\label{eq:dalpha,beta}
\end{equation}
where
\begin{equation}
\alpha_X
=
i_X \Lambda _{\mathrm{EM}} 
- i_X e ^c \wedge \tau _c
- i_X A \wedge \frac{1}{4\pi}dH
- \Phi_X
\label{eq:def_alpha}
\end{equation}
and
\begin{equation}
\beta_X
=
i_X d e ^c \wedge \tau _c
-i_X e ^c \wedge d \tau _c
+ i_X F \wedge \frac{1}{4\pi}dH
\,.
\label{eq:def_beta}
\end{equation}
Integrating (\ref{eq:dalpha,beta}) over an open domain $U$ containing the
support of $X$ and using Stokes's theorem yields 
\begin{equation*}
\int_U\beta_X=\int_U
d\alpha_X=\int_{\partial U} \alpha_X =0
\end{equation*}
since $\alpha_X|_{\partial U}=0$. Furthermore since $\beta_X$ has the
linearity property $\beta_{fX}=f\,\beta_X$ for any smooth function
$f$, it follows\footnote{
Note however since $\alpha_{fX}\ne f\,\alpha_X$ for arbitrary $f$, one cannot
similarly conclude that $\alpha_X=0$.
}
that $\beta_X=0$ \cite{GratusTuckerObukhov2012}. Finally choosing 
$X=X_a$ and using the field equation (\ref{eq:dHJ}) together with
(\ref{AB_def_Dtau}) gives
\begin{equation}\label{eq:Dtauc}
D \tau _a
= F ( X_a , \, \cdot \, ) \wedge \, J 
\end{equation}
which is equivalent to (\ref{eq:divT}), using (\ref{eq:divTDtau}).
{}From this derivation one concludes that, for {\em any} diffeomorphism and
gauge invariant action 
\begin{equation*}
S[A,J,g]=\int_{\Man}\Lambda(A,J,g)
\end{equation*}
constructed from
\begin{equation}
\Lambda(A,J,g)
=
\Lambda^{\textup{f}}(F,g)
-
A\wedge J
\end{equation}
and $F=dA$ equation 
(\ref{eq:Dtauc}) is satisfied when
$J$ and $g$ are background fields with $\Lambda^{\textup{f}}(F,g)$ arbitrary.
However for consistency this requires that $J$ be a prescribed exact 3-form.
On a topologically trivial spacetime domain $\Man$ this is implied by $dJ=0$.

If the background metric is flat and inertial coordinates
are used, (\ref{eq:dHJ}) reduces to (\ref{eq:bpF}); 
in this case the components of the 
stress-energy-momentum tensor associated with
  (\ref{eq:tauc}) are
\begin{eqnarray}\label{eq:Tabflat}
4 \pi T_{cd} &=&
\frac{1}{4} F_{ab}F^{ab} \eta _{cd}-F_{db}F_{c}{}^{b}
+ \ell ^2 
\Big(
\partial ^b F_{bc}\partial ^a F_{ad}- 
\frac{1}{2} \partial ^aF_{ab} \partial _{\ell} F^{\ell b} \eta _{cd} \Big)
\\ &&
- \ell ^2 \Big( F_{cb} \partial ^a \partial _a F^b{}_d 
+ F_{db} \partial ^b \partial ^aF_{ac} 
+ F_{cb} \partial ^b \partial ^aF_{ad} 
+ F^{ab} \partial ^e \partial _e F_{ab} \eta _{cd}
\Big)
\nonumber
\end{eqnarray}
where we have used $dF=0$. In this flat metric with the field
equations (\ref{eq:bpF}), the stress-energy-momentum tensor
(\ref{eq:Tabflat}) satisfies $\partial ^c T_{cd} = F_{db}j^b$ which is
the flat-space coordinate version of (\ref{eq:divT}).  The
stress-energy-momentum tensor (\ref{eq:Tabflat}) coincides with
that written down by \citeasnoun{Podolsky1942} and subsequently
  quoted by \citeasnoun{Zayats2013}.
The {\em derivation} above shows that, in general, it coincides with
  the Hilbert stress-energy-momentum tensor derived from metric
  variations of the action (\ref{eq:SBAg}).

Finally we point out how different definitions of the stress-energy-momentum tensor for the Bopp-Podolsky
theory are responsible for the historic problems outlined in the
beginning of this appendix. In general (\ref{eq:Adef}) becomes
\begin{equation}\label{eq:tAhA}
\tilde{A} = - \ell ^2 \star G \, , \quad
\hat{A} = A - \ell ^2 \star G \, .
\end{equation}
and the Lagrangian (\ref{eq:SBAg}) can be rewritten as
\begin{equation}\label{BP_Lambda_equiv}
\Lambda
= \Lambda^1 +
\frac{\ell^2}{4\pi} \ d\big(\star d\star F\wedge\star F\big)
\end{equation}
where
\begin{equation}\label{BP_alt_Lag}
\Lambda^1 = 
\frac{1}{8\pi} d \hat{A} \wedge\star d\hat{A} 
- \frac{1}{8\pi} \left(d \tilde{A} \wedge\star d \tilde{A}  
+ \frac{1}{\ell^2} \tilde{A}\wedge\star \tilde{A}\right) - 
(\hat{A}-\tilde{A})\wedge J
\, .
\end{equation}
As $\Lambda$ and 
$\Lambda^1$ differ only by an exact form, they lead
to the same field equations and to the same \emph{Hilbert}
stress-energy-momentum tensor. Indeed, varying the action 
\begin{eqnarray}\label{eq:S1}
\fl
S^1[\hat{A},\tilde{A},g] &=& \int_{\Man} \Lambda^1 
\\\nonumber \fl &=& 
\int_{\Man} \bigg(\frac{1}{8\pi} d \hat{A} \wedge\star d\hat{A} 
- \frac{1}{8\pi} \left(d \tilde{A} \wedge\star d \tilde{A}  
+ \frac{1}{\ell^2} \tilde{A}\wedge\star \tilde{A}\right) - 
(\hat{A}-\tilde{A})\wedge J\bigg)
\end{eqnarray}
with respect to $\hat{A}$ and $\tilde{A}$ respectively
yields the field equations
\begin{equation}\label{eq:fieldtAhA}
d \star d \hat{A} = 4 \pi J \, , \quad
d \star d \tilde{A} - \frac{1}{\ell ^2} \star \tilde{A}
= 4 \pi J \, .
\end{equation}
These equations imply that $\tilde{A}$ necessarily satisfies the
Lorenz gauge condition, $d \star \tilde{A} =0$. 

The field equations (\ref{eq:fieldtAhA}) reduce to (\ref{eq:bphat}) and
  (\ref{eq:bptilde}) in flat spacetime using inertial
  coordinates and the Lorenz gauge condition on $\hat{A}$.

The $g-$orthonormal co-frame variation of $\Lambda^1$ gives
rise to the Hilbert stress forms 
\begin{equation}\label{eq:tauBopp}
\fl
8 \pi \tau _c = d \hat{A} \wedge i_c \star d \hat{A}
-i_c d \hat{A} \wedge \star d \hat{A} 
- d \tilde{A} \wedge i_c \star d \tilde{A}
+i_c d \tilde{A} \wedge \star d \tilde{A}
+ \frac{1}{\ell ^2} \Big( \tilde{A} \wedge i_c \star \tilde{A} 
+ i_c \tilde{A} \wedge \star \tilde{A} \Big) 
\, .
\end{equation}
\cRed{Substituting} from (\ref{eq:tAhA}), we see that 
these stress forms indeed coincide with (\ref{eq:tauc}).
On flat spacetime in inertial coordinates, (\ref{eq:tauBopp})
yields the stress-energy-momentum tensor
\begin{equation}\label{eq:TBopp}
\fl
4 \pi T_{cd} = 
\frac{1}{4} \hat{F}{}_{ab} \hat{F}{}^{ab} \eta _{cd}
- \hat{F}{}_{ac} \hat{F}{}^a{}_d
- \frac{1}{4} \tilde{F}{}_{ab} \tilde{F}{}^{ab} \eta _{cd}
+ \tilde{F}{}_{ac} \tilde{F}{}^a{}_d
+ \frac{1}{\ell ^2} \Big( \tilde{A}{}_c \tilde{A}{}_d
- \frac{1}{2} \tilde{A}{}_a \tilde{A}{}^a \eta _{cd} \Big)
\end{equation}
where $\hat{F}{}_{ab} = \partial _a \hat{A}{}_b
- \partial _b \hat{A}{}_a$ and $\tilde{F}{}_{ab} = 
\partial _a \tilde{A}{}_b - \partial _b \tilde{A}{}_a$.
This is the stress-energy-momentum tensor given by 
\citeasnoun{Bopp1940}, cf. again \citeasnoun{Zayats2013}. 

Furthermore the same stress forms (\ref{eq:tauBopp})
  arise using the canonical Belinfante-Rosenfeld symmetrization of the Noether
  current associated with the Lagrangian
  (\ref{BP_alt_Lag}). By contrast the canonical Belinfante-Rosenfeld procedure
  applied to the action (\ref{eq:Lag}) does not yield
  the stress-energy-momentum forms (\ref{eq:tauBopp}).
This illustrates the fact that two Lagrangians that differ by an exact form, 
while yielding the same Hilbert stress-energy-momentum tensors, 
in general, yield
different stress-energy-momentum tensors following the canonical
Belinfante-Rosenfeld procedure.


\References

\harvarditem{Benn \harvardand\ Tucker}{1988}{BennTucker1988}
Benn I \harvardand\ Tucker, R~W  1988 {\em An Introduction to Spinors and
  Geometry With Applications in Physics} (Bristol: Adam Hilger)

\harvarditem{Bopp}{1940}{Bopp1940}
Bopp F  1940 {\em Annalen der Physik} {\bf 430}~345--384

\harvarditem{Born \harvardand\ Infeld}{1934}{BornInfeld1934}
Born M \harvardand\ Infeld L  1934 {\em Proc. Roy. Soc. London} {\bf A
  144}~425--451

\harvarditem{Dereli et~al.}{2007}{DereliGratusTucker2007}
Dereli T, Gratus J \harvardand\ Tucker R  2007 {\em J. Phys. A} {\bf
  40},~5695--5715.

\harvarditem{DeWitt \harvardand\ Brehme}{1960}{DeWittBrehm1960}
DeWitt~B~S \harvardand\ Brehme~R~W 1960 {\em Ann. Phys.} {\bf
  9}(2)~220--259

\harvarditem{Dirac}{1938}{Dirac1938}
Dirac P~A~M  1938 {\em Proc. Roy. Soc. London} {\bf A 167}~148--169

\harvarditem{Ferris \harvardand\ Gratus}{2011}{FerrisGratus2011}
Ferris M~R \harvardand\ Gratus J  2011 {\em J. Math. Phys.} {\bf
  52}(9)~092902

\harvarditem{Frenkel}{1996}{Frenkel1996}
Frenkel J  1996 {\em Phys. Rev.} {\bf E 54}~5859--5862

\harvarditem{Frenkel \harvardand\ Santos}{1999}{FrenkelSantos1999}
Frenkel J \harvardand\ Santos R  1999 {\em Int. J. Modern Phys.} {\bf B
  13}~315--324

\harvarditem{Gratus et~al.}{2012}{GratusTuckerObukhov2012}
Gratus J, Tucker R~W \harvardand\ Obukhov Y~N  2012 {\em Ann. Phys. (NY)} {\bf
  327}~2560--2590

\harvarditem{Land{\'e} \harvardand\ Thomas}{1941}{LandeThomas1941}
Land{\'e} A \harvardand\ Thomas L~H  1941 {\em Phys. Rev.} {\bf 60}~514--523

\harvarditem{Norton}{2009}{Norton2009}
Norton A~H 2009 {\em Class. Quantum Grav. 26} {\bf 105009}

\harvarditem{Pais \harvardand\ Uhlenbeck}{1950}{PaisUhlenbeck1950}
Pais A \harvardand\ Uhlenbeck G~E  1950 {\em Phys. Rev.} {\bf 79}~145--169

\harvarditem{Pavlopoulos}{1967}{Pavlopoulos1967}
Pavlopoulos T~G  1967 {\em Phys. Rev.} {\bf 159}~1106--1110

\harvarditem{Podolsky}{1942}{Podolsky1942}
Podolsky B  1942 {\em Phys. Rev.} {\bf 62}~68--71

\harvarditem{Podolsky \harvardand\ Kikuchi}{1944}{PodolskyKikuchi1944}
Podolsky B \harvardand\ Kikuchi C  1944 {\em Phys. Rev.} {\bf 65}~228--235

\harvarditem{Podolsky \harvardand\ Kikuchi}{1945}{PodolskyKikuchi1945}
Podolsky B \harvardand\ Kikuchi C  1945 {\em Phys. Rev.} {\bf 67}~184--192

\harvarditem{Podolsky \harvardand\ Schwed}{1948}{PodolskySchwed1948}
Podolsky B \harvardand\ Schwed P  1948 {\em Rev. Modern Phys.} {\bf
  20}~40--50

\harvarditem{Poisson et~al.}{2011}{PoissonPoundVega2011}
Poisson E, Pound A \harvardand\ Vega I  2011 {\em Living Rev. Relativity} 
{\bf  14(7)}

\harvarditem{Rohrlich}{2007}{Rohrlich2007}
Rohrlich F  2007 {\em {Classical Charged Particles}} 
(Singapore: World Scientific)

\harvarditem{Spohn}{2007}{Spohn2007}
Spohn H  2007 {\em Dynamics of Charged Particles and their Radiation Field}
(Cambridge: Cambridge University Press)

\harvarditem{Straumann}{1984}{Straumann1984}
Straumann N  1984 {\em General Relativity and Relativistic Astrophysics}
(Berlin:  Springer)

\harvarditem{Zayats}{2014}{Zayats2013}
Zayats A~E  2014 {\em Ann. Phys. (NY)} {\bf 342}~11--20

\endrefs

\end{document}